\def \bF {\pmb{F}}
\def \bH {\pmb{H}}
\def \bx {\pmb{x}}
\def \by {\pmb{y}}
\def \bX {\pmb{X}}
\def \bz {\pmb{z}}
\def \bZ {\pmb{Z}}
\def \bw {\pmb{w}}
\def \bs {\pmb{s}}
\def \bv {\pmb{v}}
\def \bb {\pmb{b}}
\def \bp {\pmb{p}}
\def \bq {\pmb{q}}
\def \btheta {\pmb{\theta}}
\def \bbeta {\pmb{\beta}}
\def \boldeta {\pmb{\eta}}
\def \bone {\pmb{1}}
\def \diag {\text{diag}}
\newtheoremstyle{italichead}
  {6pt}
  {6pt}
  {\normalfont}
  {}
  {\itshape}
  {.}
  {0.5em}
  {}
\theoremstyle{italichead}
\newtheorem{theorem}{Theorem}
\newtheorem{proposition}[theorem]{Proposition}
\newtheorem{corollary}[theorem]{Corollary}
\newtheorem{definition}[theorem]{Definition}
\theoremstyle{definition}
\newtheorem{remark}{Remark}
\begin{document}

\author{Amirhossein Nazerian}
\affiliation{Department of Mechanical Engineering, University of New Mexico, Albuquerque, NM 87131, USA}
\author{Sahand Tangerami}
\affiliation{Department of Mechanical Engineering, K. N. Toosi University of Technology, Tehran, Iran}
\author{Malbor Asllani}
\affiliation{Department of Mathematics, Florida State University,
1017 Academic Way, Tallahassee, FL 32306, USA}
\author{David Phillips}
\affiliation{Department of Mechanical Engineering, University of New Mexico, Albuquerque, NM 87131, USA}
\author{Hernan Makse}
\affiliation{Levich Institute and Physics Department, City College of New York, New York, NY 10031, USA}
\author{Francesco Sorrentino}
\altaffiliation{Corresponding author. Email: fsorrent@unm.edu}
\affiliation{Max Planck Institute for the Physics of Complex Systems, 01187 Dresden, Germany}
\affiliation{Department of Mechanical Engineering, University of New Mexico, Albuquerque, NM 87131, USA}



\title{Extreme vulnerability to intruder attacks destabilizes network dynamics}

\begin{abstract}
Consensus, synchronization, formation control, and power grid balance are examples of desirable dynamical states that arise in networks. Here we investigate how such states can be destabilized by an intruder agent within an otherwise functioning network. We show that a single adversarial node, coupled through adversarial connections to one or more other nodes, is sufficient to destabilize the entire network and is more effective than targeting multiple nodes. We further show that concentrating the attack on a single low-indegree node induces the greatest instability, challenging the common assumption that hubs are the most critical nodes. This leads to a new characterization of network vulnerability, identifying low-indegree nodes as the most vulnerable components. Although derived for linear systems, our results extend to nonlinear networks, including the Kuramoto model. These findings reveal an intrinsic vulnerability of technological, social, and biological networks.
\end{abstract}

\maketitle

\section{Introduction}

Complex networks are ubiquitous in technological, biological, and social systems. It is often the case that the nodes of these networks evolve over time to reach a well-defined dynamical state, such as consensus in opinion dynamics, synchronization in oscillator networks, or a desired pattern in coupled autonomous systems \cite{6341809, 9356608, 1545539}. Ensuring the robustness of such networks against failures and attacks is a critical research challenge \cite{artime2024robustness}. Robustness has been studied extensively from both structural and dynamical perspectives \cite{granovetter1978threshold, pastor2001epidemic, castellano2009statistical, centola2010spread}. Traditional approaches to studying network robustness, including structural attacks such as cascading failures via percolation theory and node or edge removals, have offered valuable insights.  Structural interventions, such as cascade failures \cite{PhysRevE.66.065102, Buldyrev_2010, PhysRevE.69.045104, motter2004cascade}, node or edge removals \cite{DENG2007714, motter2004cascade}, and network fragmentation \cite{requiao2015fast}, examine how disruptions alter the network's connectivity.

Alternatively, network instabilities may arise from the presence of adversarial connections between the nodes of a network \cite{Altafini2013Consensus,Chen2017spectral,ahmadizadeh2017eigenvalues}. In contrast to cascading failures, which are triggered by the removal of nodes or edges, adversarial interactions correspond to the addition of links with negative weights. Such interactions have been reported in brain networks as inhibitory connections between neurons \cite{ahmadizadeh2017eigenvalues}, in social networks as antagonistic relationships between agents \cite{Altafini2013Consensus}, and in distributed control systems as faulty or corrupted communication processes \cite{Chen2017spectral}.

While both cascading failures \cite{Buldyrev_2010,reis2014avoiding,albert2000error,PhysRevE.66.065102} and instabilities induced by adversarial interactions can disrupt network functionality, they operate through fundamentally different mechanisms. Cascading failures typically unfold as discrete, far-from-equilibrium processes, whereas adversarial interactions affect an existing equilibrium through continuous dynamical effects. 
In this work, we focus on the latter mechanism, namely the destabilization induced by adversarial connections. Although distinct in nature, both edge removals and adversarial interactions can ultimately lead to network instability.

A fundamental question, with significant societal and technological implications, is  how desired dynamical network states (such as coordinated dynamical states in networks of drones or autonomous vehicles) can be
disrupted by the purposeful insertion of one or a few
{adversarial} agents. 
In particular, this is relevant the 
case of cyber-attacks (CAs)  against
cyber-physical systems (CPSs) and critical infrastructures, such as power grids \cite{pagani2013power}, autonomous vehicle fleets \cite{ren2007information}, and industrial networks \cite{hearnshaw2013complex}. 
Since CAs target the cyber (software) level, while leaving the physical (hardware) level unaltered, they can be effectively modeled as network nodes that have the same structure as the others, but act adversarially.  Because CPSs rely on synchronization and coordination among such agents to ensure stability and proper functionality \cite{zhang2011optimal}, the presence of even a small number of adversarial nodes can directly undermine these mechanisms, rendering CPSs particularly vulnerable to attacks that disrupt synchronization \cite{pasqualetti2013attack}.  
Thus an essential step in order to develop resilience against  
cyber-attacks is to understand the mechanisms by which one or a few
intruder agents within 
an otherwise functioning network may compromise its dynamics. Our work also extends to attacks against social networks and biological networks, which we demonstrate using the classical Kuramoto model \cite{KuraBOOK}.

{Many real-world networks are highly directed, non-normal, and hierarchically organized, structural features known to strongly shape their dynamical response \cite{johnson2017looplessness,Asllani2018Structure,asllani2018topological,MUOLO2019Patterns,muolo2020synchronization,o2021hierarchical,Duan2022Network,nazerian2023commphys,Ramon, nazerian2024efficiency, muolo2024persistence, nazerian2026frequency}. In such systems, asymmetries in the interaction pattern generate preferred source-to-sink directions of flow, while the network architecture is often close to acyclic, with only sparse or weak feedback loops. As a consequence, perturbations do not need to spread isotropically: they can be transiently amplified, strongly biased downstream, and funneled toward terminal or sink components \cite{o2021hierarchical,Ramon,nazerian2026frequency}. These features naturally call for spectral measures that capture both stability and transient amplification when assessing vulnerability to adversarial attacks.}

{For all the cases of consensus, synchronization, formation control, power grid balance, and Kuramoto-type dynamics, we will see that the most vulnerable components of a network are the low-indegree nodes, {where the indegree of node $i$ is defined as the sum of the weights of the connections directed to node $i$,} as opposed to the hubs on which much previous work has focused \cite{artime2024robustness}. Thus 
while 
{attacks targeting the hubs usually lead to more severe cascading failures, attack strategies that leverage adversarial interactions are more effective when targeting nodes with low indegree. }

The general theme of this paper is  illustrated in Fig.\ \ref{fig:main} through an example. 
Fig.\,\ref{fig:main} a) displays a network of drones attempting to attain a desired formation.  The intruder attack consists in a single intruder node establishing adversarial connections with some of the other nodes, which is shown in b). The bottom panels of Fig.\,\ref{fig:main} are pictorial representations of the spatial trajectories of the drones before and after the attack: in c)  they converge to a set of desired target positions;  in d) they diverge from the target positions, due to the interactions with the intruder.

\section{RESULTS}

This paper investigates the case of intruder attacks on networks,
where 
our definition of an intruder is a network node that obeys the same individual dynamics  and communicates with the same output function as the other network nodes, but acts maliciously in order to disrupt the network dynamics.  In terms of the mathematical model, this means that an intruder node obeys the same individual (uncoupled) dynamics  and it uses the same output function as the other nodes in the network, but it interacts with the remaining nodes in a purposefully adversarial 
way. 
{In what follows, we mathematically formulate the problem of intruder attacks on networks and derive rigorous conditions to quantify their destabilizing effects on the dynamics. {Although our results extend to both undirected and directed networks, we focus principally on the latter as real-world systems are usually characterized by directional interactions \cite{Asllani2018Structure, o2021hierarchical, Ramon, johnson2020digraphs}.} We first focus on the important case of the linear consensus problem in networks and then consider other types of network dynamics, including formation control, power grid stability, and synchronization in oscillator networks. Our work reveals an intrinsic vulnerability in technological systems, such as autonomous networks, power grids, and sensor systems, emphasizing the need for a deeper understanding of these weaknesses and the development of specific strategies to mitigate them.}

We consider a general set of equations for the network dynamics,
\begin{equation}
    \dot{\bx}_i (t)= \bF(\bx_i (t))-\sum_{j=1}^N L_{ij} \bH(\bx_j (t) - \bbeta_j), \quad i=1,...,N, \label{general}
\end{equation}
where  $\bx_i(t)$ is the $m$-dimensional state of node $i$ at time $t$ and $N$ is the number of nodes. The individual nodal dynamics is given by $\bF(\bx_i(t))$, and the output of node $j$ is given by the function $\bH(\bx_j(t) - \bbeta_j)$ where $\bbeta_j$ is an $m$-dimensional constant vector that acts as a coordinate shift ($\bbeta_j$ may be zero). {The network connectivity is described by a digraph with adjacency matrix $A=[A_{ij}]$, where  $A_{ij}$ is the strength of the directed coupling from node $j$ to node $i \neq j$, $A_{ij}>0$ indicates a cooperative interaction (also sometimes called mutualistic or attractive), $A_{ij}<0$ indicates an adversarial interaction (also sometimes called competitive or repulsive), and $A_{ij}=0$ indicates no interaction.} The indegree of node $i$ is equal to $d_i^{in}=\sum_{j \neq i} A_{ij}$ and the outdegree of node $i$ is equal to $d_i^{out}=\sum_{j \neq i} A_{ji}$. The Laplacian matrix is denoted by $L=[L_{ij}]$ where $L_{ij}= (\delta_{ij} d_i^{in} - A_{ij})$, and $\delta_{ij}$ is the Kronecker delta. By construction, $\sum_{j=1}^N L_{ij} = 0, \forall i$. {For clarity, we note that two sign conventions for the Laplacian matrix are used in the literature: one with off-diagonal entries equal to the adjacency matrix and diagonal entries equal to the negative indegrees, and one with off-diagonal entries equal to the negative adjacency matrix and diagonal entries equal to the indegrees. In this work, we adopt the latter convention.}
We proceed under the assumption that the Laplacian matrix $L$ is `proper', i.e., all the eigenvalues of  $L$ have non-negative real parts and there is only one eigenvalue equal to $0$.
Conditions for the Laplacian matrix of a signed digraph to be proper have been investigated in the literature \cite{Altafini2013Consensus,Chen2017spectral,ahmadizadeh2017eigenvalues}.

Reference \cite[Definition 2.32]{wu2007synchronization} defines, for a given Laplacian matrix $L$, the following algebraic connectivity $f \in \mathbb{R}$ which may be considered as a generalization of the Fiedler value \cite{fiedler1973algebraic}  for digraphs:
\begin{equation} \label{eq:algeb}
    f(L) := \min_{\bX \neq \pmb{0}, \bX \perp \pmb{1}} \dfrac{\bX^\top L \bX}{\bX^\top \bX} = \lambda_{\min} \left( V^\top \dfrac{L + L^\top}{2} V \right).
\end{equation}

Here, $V \in \mathbb{R}^{N \times N-1}$ is an orthonormal basis for the null subspace of $\bone^\top$, i.e., $V$ is a matrix whose columns are normal and orthogonal to one another and have zero column sums, and $\lambda_{\min}(S)$ indicates the smallest eigenvalue of the symmetric matrix $S$.
Note that if $L$ is symmetric (the graph is undirected), the {Fiedler value of the Laplacian $\alpha = f \geq 0$,} where the strict inequality is achieved if and only if the graph is connected.
    The algebraic connectivity $f$ is relevant to studying the synchronization and consensus stabilities of directionally coupled networks. 
    Also, $f$ appears in studying the contractivity of such dynamics and their respective Lyapunov functions, e.g., see \cite[Corollary 4.23]{wu2007synchronization}. 
    Note that for $f<0$, $-{f}$ measures the time scale of the unstable dynamics.

We consider the presence of an adversarial agent, labeled as node $N+1$, and that this agent is described by the same functions $\bF$ and $\bH$ as the remaining nodes.
As a result, the network equations can be written,
\begin{equation}
    \dot{\bx}_i (t)= \bF(\bx_i (t))-\sum_{j=1}^{N+1} {L_{aug}}_{ij} \bH(\bx_j (t) - \bbeta_j),
\end{equation}
$i=1,...,N+1$, where the augmented Laplacian matrix $L_{aug}$ is either equal to
\begin{equation} \label{eq:Lnewbi}
    L_{aug}^b = \begin{bmatrix}
        L + \text{diag}(\bb) & -\bb \\
        -\bb^\top & -c
    \end{bmatrix}
\end{equation}
in the case that the intruder is bidirectionally coupled to the network nodes or equal to
\begin{equation} \label{eq:Lnewuni}
    L_{aug}^u = \begin{bmatrix}
        L + \text{diag}(\bb) & -\bb \\
        \pmb{0}^\top & 0
    \end{bmatrix}
\end{equation}
in the case that the intruder is unidirectionally coupled to the network nodes.
In both \eqref{eq:Lnewbi} and \eqref{eq:Lnewuni} the vector $\bb=[b_i \leq 0]$, $\bb \neq \textbf{0}$ (i.e., at least one $b_i<0$) contains the weights with which the additional node connects to nodes $1,...,N$. {A negative (zero) weight $b_i$ indicates the presence (absence) of an adversarial connection between the intruder and node $i$.  Furthermore, we introduce the attack budget, defined as the positive quantity $c=-\sum_{i=1}^N b_i${, which is taken to be equal to the sum of the negatives of the weights of all the connections established by the intruder with the existing network nodes.}} 

\begin{figure*}
    \centering
    \includegraphics[width=0.8\linewidth]{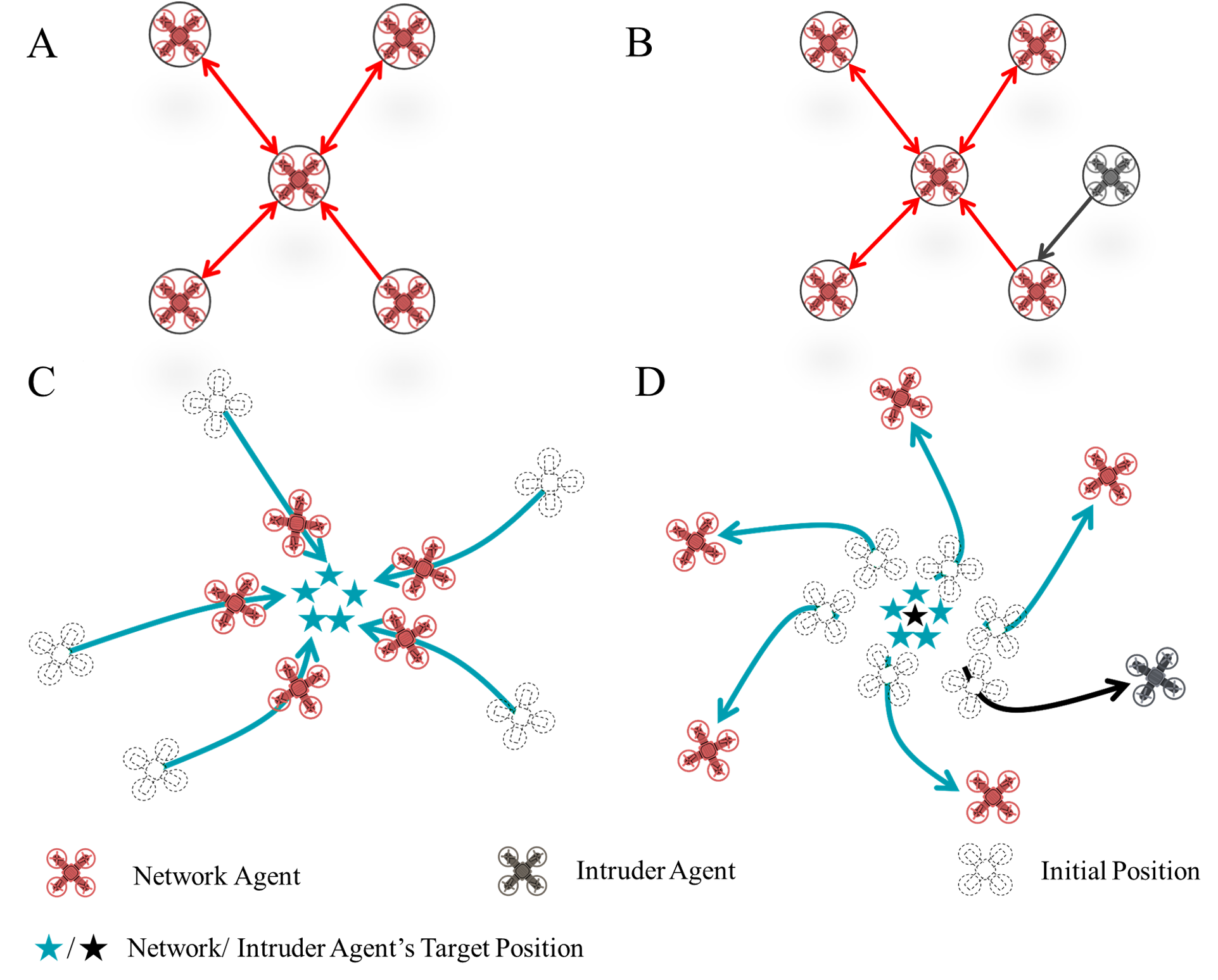}
    \caption{\textbf{Illustration of an intruder attack on a network of drones attaining a given formation.} The top panels show a network of coupled drones, before (A) and after (B) an intruder attack. The bottom panels display the trajectories of the drones before the attack in (C), showing convergence to the target positions, and after the attack in (D), showing divergence from the target position. The target positions are represented as stars in (C) and (D).}
    \label{fig:main}
\end{figure*}

{
A foundamental realization of the general dynamics \eqref{general}, which we will analyze in detail in what follows, is that of the single-integrator consensus dynamics, described by the equation,
\begin{equation} \label{consensus}
    \dot{\bX} (t) = -L \bX(t),
\end{equation}
in the $N$-dimensional state vector $\bX=[x_1,x_2,...,x_N]$ ($x_i$ here is a scalar). When the Laplacian matrix is proper, Eq.\ \eqref{consensus}
reaches consensus asymptotically independent of its initial condition, i.e., $\lim_{t \rightarrow \infty} x_i(t) = \bar{x}$, $i=1,...,N$, where $\bar{x}$ is the consensus state \cite{olfati2007consensus}. 
In Supplementary Note 1, we provide an example of single-integrator consensus dynamics \eqref{consensus} on a network subject to a single intruder attack to further motivate the relevance of the algebraic connectivity $f$. In this example, the intruder unidirectionally connects to different network nodes, resulting in different values of the algebraic connectivity $f$ as the Laplacian matrix $L_{aug}^u$ changes.
In all cases, we see that after the intruder connects to one of the other nodes, the individual trajectories $x_i (t)$ diverge from one another.

{
We characterize consensus in terms of the time evolution of the norm of the perturbations about the consensus state 
\begin{equation}
    \| \delta \bX(t) \| := \sqrt{\sum_{j=1}^{N} \left(x_j (t) - \frac{1}{N} \sum_{k=1}^{N} x_k (t)  \right)^2}. 
\end{equation} Consensus is achieved if $\|\delta \mathbf{X}(t)\| \to 0$ as $t \to \infty$. }

\begin{definition}\textit{{Destabilization of the single-integrator consensus dynamics.}}
By construction the Laplacian matrix has one eigenvalue equal zero. The consensus dynamics (6) is: (i) stable iff all the remaining eigenvalues have positive real part and (ii) unstable iff at least one of the remaining eigenvalues has negative real part. Consequently, we call a destabilization a transition from (i) to (ii).  
\end{definition}
\color{black}
We take the original Laplacian matrix $L$ to be proper, which guarantees that consensus is achieved prior to the attack. However, the addition of a single adversarial connection with the intruder implies that the resulting augmented Laplacian matrices $L_{aug}^b$ and $L_{aug}^u$ are no longer proper, as they are known to possess at least one eigenvalue with negative real part (see e.g., \cite{Chen2017spectral,ahmadizadeh2017eigenvalues}) and, as a consequence, consensus is not achieved. 


While the the addition of a single adversarial connection with the intruder produces a destabilization, in this paper we focus not on $\alpha(L_{aug})$, the asymptotic rate of instability for $\| \delta \bX(t) \|$, but rather on its transient rate of instability $|f(L_{aug})| \geq |\alpha(L_{aug})|$ immediately following the attack. As detailed in Subsection A of the Methods, a larger transient growth arises from the non-normal dynamics  characteristic of directed networks \cite{Asllani2018Structure, trefethen1991pseudospectra}. We emphasize the transient rate for three main reasons: (i) when Eq.~(1) is obtained via linearization about an unstable synchronous solution \cite{nazerian2024efficiency}, it accurately describes the dynamics only shortly after the perturbation, rendering the asymptotic rate less relevant; (ii) the transient rate provides a worst-case estimate, as it is greater than or equal to the asymptotic rate; and (iii) larger transient growth increases the likelihood that a perturbation produces a significant alteration of the system dynamics \cite{asllani2018topological, Asllani2018Structure, MUOLO2019Patterns, muolo2020synchronization, nazerian2024efficiency}.


{
Supplementary Note 1 also reviews some of the relevant properties of the algebraic connectivity $f$ from \cite{wu2007synchronization}.}}

{
In principle, one could consider the presence of more than one intruder; however, we show in what follows that under appropriate conditions, a single attacker is able to destabilize the network dynamics, so we focus on the case of a single attacker.} {The case of multiple intruders is studied in Supplementary Note 2.}

Next, we introduce our main two problems, which we call attack via bidirectional connections and attack via unidirectional connections.

{Problem 1 Statement: Attack via bidirectional connections.} Given the $N$-dimensional possibly asymmetric Laplacian matrix $L$, and a budget $-c$, find the weights $\bb = [b_i]$, $b_i \leq 0$, $i = 1, \hdots, N$, with $\sum_i b_i = -c \leq 0$ such that the algebraic connectivity $f$ of the new graph with Laplacian matrix $L_{aug}^b$ in Eq.\,\eqref{eq:Lnewbi} is minimized.

\begin{proposition} \label{prop1}
    Given $-c \leq 0$ and the Laplacian matrix $L_{aug}^b$ in Eq.\,\eqref{eq:Lnewbi}, the minimum algebraic connectivity $f$ is achieved when $b_{i^*} = -c$ for one node $i^*$ and $b_i = 0$ for all other nodes $i \neq i^*$, i.e., the entire budget must be allocated to one node.
\end{proposition}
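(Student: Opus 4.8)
The plan is to combine the variational formula \eqref{eq:algeb} for the algebraic connectivity with the special structure of the perturbation in \eqref{eq:Lnewbi}. The point is that, for a fixed test vector $\bX$, the numerator $\bX^\top L_{aug}^b\bX$ of the defining Rayleigh quotient turns out to be \emph{affine} in the entries $b_i$ with \emph{nonnegative} coefficients, so minimizing it over $\bb$ is minimizing a linear functional over a simplex, whose minimum sits at a vertex --- and a vertex of that simplex is precisely a single-node allocation. Throughout, let $\bb$ range over the simplex $\Delta_c:=\{\bb\in\mathbb{R}^N:\ b_i\le 0,\ \sum_i b_i=-c\}$, and write $f(\bb)$ for the algebraic connectivity of $L_{aug}^b$ in \eqref{eq:Lnewbi} regarded as a function of $\bb$.

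The first and central step is an algebraic identity. Partitioning a test vector as $\bX=[\by^\top,\,z]^\top$ with $\by\in\mathbb{R}^N$, $z\in\mathbb{R}$, and using the block form of $L_{aug}^b$, the relation $-c=\sum_i b_i$, and $\by^\top L\by=\by^\top\frac{L+L^\top}{2}\by$, a direct computation yields
\[
 \bX^\top L_{aug}^b\,\bX \;=\; q(\bb,\bX)\;:=\;\by^\top\,\frac{L+L^\top}{2}\,\by\;+\;\sum_{i=1}^{N} b_i\,(y_i-z)^2,
\]
while $\bX^\top\bX=\|\by\|^2+z^2$ does not involve $\bb$. Thus the $\bb$-dependence of the numerator is linear, separated across the nodes, and weighted by the nonnegative quantities $(y_i-z)^2$.

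Next I would interchange the two minimizations and solve the inner one. By \eqref{eq:algeb} and the identity above,
\[
 \min_{\bb\in\Delta_c} f(\bb)\;=\;\min_{\bb\in\Delta_c}\ \min_{\substack{\bX\perp\bone\\ \bX\neq\pmb{0}}}\frac{q(\bb,\bX)}{\|\bX\|^2}\;=\;\min_{\substack{\bX\perp\bone\\ \bX\neq\pmb{0}}}\ \min_{\bb\in\Delta_c}\frac{q(\bb,\bX)}{\|\bX\|^2}.
\]
For fixed $\bX$, the denominator and the term $\by^\top\frac{L+L^\top}{2}\by$ are constants, so the inner problem is to minimize the linear functional $\bb\mapsto\sum_i b_i(y_i-z)^2$ over $\Delta_c$; since every coefficient $(y_i-z)^2\ge 0$, its minimum is attained at a vertex of $\Delta_c$, i.e. by allocating the whole budget to a node $i^*\in\arg\max_i(y_i-z)^2$ and zero to the rest. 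Hence a single-node allocation is optimal against every test vector. To upgrade this to the statement of the proposition, let $\bb^\dagger\in\Delta_c$ be any minimizer of $f$ (it exists because $\Delta_c$ is compact and $\bb\mapsto f(\bb)$ is continuous, being the smallest eigenvalue of a matrix depending affinely on $\bb$), and let $\bX^\dagger=[\by^\dagger,\,z^\dagger]^\top$ attain the inner minimum in $f(\bb^\dagger)$. Choosing $i^*\in\arg\max_i(y_i^\dagger-z^\dagger)^2$ and letting $\widehat\bb$ be the corresponding single-node allocation, the vertex conclusion applied at $\bX^\dagger$ gives $q(\widehat\bb,\bX^\dagger)\le q(\bb^\dagger,\bX^\dagger)$, whence
\[
 f(\widehat\bb)\;\le\;\frac{q(\widehat\bb,\bX^\dagger)}{\|\bX^\dagger\|^2}\;\le\;\frac{q(\bb^\dagger,\bX^\dagger)}{\|\bX^\dagger\|^2}\;=\;f(\bb^\dagger)\;=\;\min_{\bb\in\Delta_c}f(\bb),
\]
and $f(\bb^\dagger)\le f(\widehat\bb)$ since $\bb^\dagger$ is a minimizer. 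Therefore $f(\widehat\bb)=\min_{\bb\in\Delta_c}f(\bb)$, i.e. the whole budget on one node is optimal, which is the claim.

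The only genuinely delicate point is this last passage from ``pointwise'' to ``global'' optimality: the inner linear program shows that for \emph{each} test vector \emph{some} single-node allocation is best, but the favored node $i^*$ changes with the vector, so no single $\widehat\bb$ is manifestly optimal against all of them. Anchoring the argument at the critical vector $\bX^\dagger$ of an already-optimal $\bb^\dagger$ is exactly what makes the chain of inequalities close. The remaining ingredients are routine: the block computation behind $q$, the elementary fact that a linear functional on a simplex attains its minimum at a vertex, and the compactness/continuity remarks. (Incidentally, at such a $\bb^\dagger$ one has $\max_i(y_i^\dagger-z^\dagger)^2>0$, since $\by^\dagger=z^\dagger\bone$ together with $\bX^\dagger\perp\bone$ would force $\bX^\dagger=\pmb{0}$; determining \emph{which} node $i^*$ should receive the budget is the object of the analysis that follows.)
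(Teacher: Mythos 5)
Your proof is correct and follows essentially the same route as the paper's: both reduce the problem to observing that, for a fixed test vector, the Rayleigh-quotient numerator is affine in $\bb$, so the minimum over the budget simplex sits at a vertex (your coefficients $(y_i-z)^2$ and the paper's $\beta_i=y_i^2-2y_{N+1}y_i$ differ only by the constant $z^2$ absorbed from $-cz^2=z^2\sum_i b_i$, and select the same node). Your write-up is in fact slightly more complete, since the manifestly nonnegative coefficients and the anchoring of the vertex argument at the critical vector of an optimal $\bb^\dagger$ make explicit the pointwise-to-global step that the paper leaves implicit.
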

\begin{proof}
    See Methods \ref{sec:prop1}.
\end{proof}

\begin{proposition} \label{prop2}
    Given a budget $-c \leq 0$ in Problem 1, the algebraic connectivity is $f \leq 0$.
\end{proposition}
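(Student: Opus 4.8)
The plan is to bound $f(L_{aug}^b)$ from above directly through its variational characterization \eqref{eq:algeb}: since $f(L_{aug}^b)=\min\{\bX^\top L_{aug}^b\bX/(\bX^\top\bX):\bX\neq\pmb{0},\ \bX\perp\bone\}$, it suffices to exhibit a single admissible vector $\bX\in\mathbb{R}^{N+1}$ for which this Rayleigh quotient is non-positive.

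First I would take the explicit test vector $\bX=[\,\bone_N^\top,\ -N\,]^\top$, the configuration in which all $N$ original nodes share a common value while the intruder sits at the opposing extreme — geometrically, the direction along which the network splits off from the adversary. It is admissible since $\bone_{N+1}^\top\bX=N-N=0$. I would then evaluate $\bX^\top L_{aug}^b\bX$ block by block against \eqref{eq:Lnewbi}, using three elementary facts: (i) $L$ has zero row sums, so the top-left block contributes $\bone_N^\top L\,\bone_N=0$ (and since a quadratic form only sees $(L+L^\top)/2$, the non-normal part of $L$ is irrelevant here); (ii) $\bone_N^\top\diag(\bb)\,\bone_N=\sum_i b_i=-c$; (iii) the two off-diagonal terms sum to $-2(-N)\,\bb^\top\bone_N=-2Nc$ and the corner term gives $(-N)^2(-c)=-cN^2$. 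Collecting these, $\bX^\top L_{aug}^b\bX=-c(1+2N+N^2)=-c(N+1)^2$, while $\bX^\top\bX=N+N^2=N(N+1)$, so the quotient equals $-c(N+1)/N\le 0$; this proves the claim, and in fact yields the strict inequality $f<0$ whenever $c>0$, which holds since at least one $b_i<0$.

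The only point needing care — hardly an obstacle — is the bookkeeping in step (iii): keeping the block layout of $L_{aug}^b$ straight (the $(1,2)$ and $(2,1)$ blocks are $-\bb$ and $-\bb^\top$, and symmetrization leaves them untouched) and tracking the sign of the budget through the cross and corner terms. I would close by observing that the computation uses only $\sum_i b_i=-c$, so the bound $f\le 0$ holds for \emph{every} feasible allocation $\bb$, not merely the optimal one; together with Proposition \ref{prop1} this shows the optimal bidirectional attack produces $f<0$, hence destabilizes the consensus dynamics \eqref{consensus}.
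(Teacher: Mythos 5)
Your proposal is correct and is essentially the paper's own argument: both bound $f$ via the Rayleigh quotient at the test vector with all original nodes equal and the intruder at $-N$ (the paper writes it as $[N,-1,\dots,-1]^\top$, the same vector up to sign and ordering convention), arriving at the identical value $-c(N+1)^2/(N^2+N)\le 0$. Your added observations — that the bound holds for every feasible $\bb$ and is strict when $c>0$ — are valid but do not change the route.
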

\begin{proof}
    See Methods \ref{sec:prop2}.
\end{proof}

{Problem 2 Statement: Attack via unidirectional connections.} Given the $N$-dimensional possibly asymmetric Laplacian matrix $L$, and a budget $-c$, find the weights $\bb = [b_i]$, $b_i \leq 0$, $i = 1, \hdots, N$, with $\sum_i b_i = -c \leq 0$ such that the algebraic connectivity $f$ of the new graph with the Laplacian matrix $L_{aug}^u$ in Eq.\,\eqref{eq:Lnewuni} is minimized.

\begin{proposition} \label{prop3}
    Given $-c \leq 0$ and the Laplacian matrix $L_{aug}^u$ in Eq.\,\eqref{eq:Lnewuni}, the minimum algebraic connectivity $f$ is achieved when $b_{i^*} = -c$ for one node $i^*$ and $b_i = 0$ for all other nodes $i \neq i^*$, i.e., the entire budget must be allocated to one node.
\end{proposition}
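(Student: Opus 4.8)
The plan is to mirror the proof of Proposition~\ref{prop1}, exploiting that the algebraic connectivity $f$ in \eqref{eq:algeb} is a \emph{minimum} over test vectors while, for each fixed test vector, $f$ depends \emph{affinely} on the attack weights $\bb$. Write an arbitrary vector orthogonal to $\bone\in\mathbb{R}^{N+1}$ as $\bX=[\bY^\top,z]^\top$ with $\bY\in\mathbb{R}^N$. Since the quadratic form only sees the symmetric part of $L_{aug}^u$, a direct block expansion gives
\begin{equation*}
    \bX^\top L_{aug}^u \bX = \bY^\top \tfrac{L+L^\top}{2}\bY + \sum_{i=1}^N b_i\,Y_i(Y_i-z),
\end{equation*}
whereas $\bX^\top\bX=\|\bY\|^2+z^2$ does not involve $\bb$. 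Hence, for each fixed admissible $\bX$ (i.e.\ $\bX\perp\bone$, $\bX\neq\pmb0$), the Rayleigh quotient $\bX^\top L_{aug}^u\bX/\bX^\top\bX$ is an affine function of $\bb$.

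Next I would record that the feasible set $P_c=\{\bb\in\mathbb{R}^N:\ b_i\le 0,\ \sum_i b_i=-c\}$ is a compact polytope — a scaled simplex — whose extreme points are exactly the concentrated allocations $\bb=-c\,\pmb e_i$, $i=1,\dots,N$. Writing $L_{aug}^u(\bb)$ for the dependence on $\bb$, one exchanges the two minimizations,
\begin{equation*}
    \min_{\bb\in P_c} f\big(L_{aug}^u(\bb)\big)=\min_{\bb\in P_c}\,\min_{\bX}\frac{\bX^\top L_{aug}^u(\bb)\bX}{\bX^\top\bX}=\min_{\bX}\,\min_{\bb\in P_c}\frac{\bX^\top L_{aug}^u(\bb)\bX}{\bX^\top\bX},
\end{equation*}
which is legitimate because the feasible set for $\bX$ does not depend on $\bb$ and, restricting to $\|\bX\|=1$, all the minima involved are attained by compactness. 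For each fixed $\bX$ the inner problem minimizes an affine function over the polytope $P_c$, so it attains its value at a vertex $-c\,\pmb e_i$; therefore
\begin{equation*}
    \min_{\bb\in P_c} f\big(L_{aug}^u(\bb)\big)=\min_{\bX}\,\min_{1\le i\le N}\frac{\bX^\top L_{aug}^u(-c\,\pmb e_i)\bX}{\bX^\top\bX}=\min_{1\le i\le N} f\big(L_{aug}^u(-c\,\pmb e_i)\big).
\end{equation*}
Thus the minimum of $f$ over all admissible budget allocations equals its minimum over the single-node allocations, hence it is achieved by concentrating the entire budget on one node $i^\ast$. Equivalently, $\bb\mapsto f(L_{aug}^u(\bb))$ is concave, being a pointwise infimum of affine maps, and a concave function on a polytope attains its minimum at an extreme point.

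The routine parts are the block expansion of the quadratic form and the identification of the vertices of $P_c$. The point demanding care is the interchange of the two minimizations together with the attainment of all the extrema — i.e.\ justifying the passage from $\min_{\bb}\min_{\bX}$ to $\min_{\bX}\min_{\bb}$ and that the inner linear program over $P_c$ is solved at a vertex; this rests on compactness of $P_c$, which in turn uses $b_i\in[-c,0]$. Note that the argument establishes \emph{that} a concentrated attack is optimal but not \emph{which} node $i^\ast$ should be targeted; identifying $i^\ast$ — and in particular showing it is a low-indegree node — is carried out in the subsequent subsections.
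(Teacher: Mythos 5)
Your proposal is correct and follows essentially the same route as the paper's proof: block-expanding the Rayleigh quotient to exhibit its affine dependence on $\bb$ for each fixed test vector (your $\sum_i b_i\,Y_i(Y_i-z)$ is exactly the paper's $\sum_i \beta_i b_i$ with $\beta_i=y_i^2-y_{N+1}y_i$), and then concluding that the minimum over the budget simplex is attained at a vertex $-c\,\pmb e_{i^*}$. If anything, your explicit justification of the $\min_{\bb}\min_{\bX}=\min_{\bX}\min_{\bb}$ exchange and the concavity remark make the argument slightly tighter than the paper's version, and your closing caveat that the proof identifies the concentrated structure of the optimal $\bb$ but not the particular node $i^*$ matches the paper's own remark.
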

\begin{proof}
    See Methods \ref{sec:prop3}.
\end{proof}

\begin{proposition} \label{prop4}
    Given a budget $-c \leq 0$ in Problem 2, the algebraic connectivity is $f \leq 0$.
\end{proposition}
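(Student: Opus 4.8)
The plan is to read the result directly off the variational characterization of the algebraic connectivity in Eq.\,\eqref{eq:algeb}. Since $f(L_{aug}^u) = \min_{\bX \neq \pmb{0},\, \bX \perp \pmb{1}} (\bX^\top L_{aug}^u \bX)/(\bX^\top \bX)$ and a real quadratic form depends only on the symmetric part of the matrix, it suffices to exhibit a single admissible test vector $\bX \in \mathbb{R}^{N+1}$, i.e.\ $\bX \neq \pmb{0}$ with $\bX \perp \pmb{1}_{N+1}$, for which $\bX^\top L_{aug}^u \bX \leq 0$; this forces $f(L_{aug}^u) \leq 0$. So the whole proof reduces to choosing $\bX$ well.

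The right choice is the perturbation that pits the intruder against the rest of the network: $\bX = [\pmb{1}_N^\top,\, -N]^\top$, the vector whose first $N$ entries equal $1$ and whose last entry equals $-N$ (equivalently, the orthogonal projection of $\pmb{e}_{N+1}$ onto $\pmb{1}^\perp$, rescaled). It is nonzero and its entries sum to $N - N = 0$, so $\bX \perp \pmb{1}_{N+1}$. Expanding $\bX^\top L_{aug}^u \bX$ with $L_{aug}^u$ in the block form of Eq.\,\eqref{eq:Lnewuni}, the identically zero intruder row eliminates two of the four block contributions, and what remains is
\begin{equation}
    \bX^\top L_{aug}^u \bX = \pmb{1}_N^\top L \pmb{1}_N + \pmb{1}_N^\top \diag(\bb) \pmb{1}_N + N \pmb{1}_N^\top \bb .
\end{equation}
Here $\pmb{1}_N^\top L \pmb{1}_N = 0$ because $L$ has zero row sums ($L \pmb{1}_N = \pmb{0}$), while $\pmb{1}_N^\top \diag(\bb) \pmb{1}_N = \pmb{1}_N^\top \bb = \sum_i b_i = -c$. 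Therefore $\bX^\top L_{aug}^u \bX = (1+N)\sum_i b_i = -(1+N)c \leq 0$, and dividing by $\bX^\top \bX = N(N+1)$ yields the quantitative bound $f(L_{aug}^u) \leq -c/N \leq 0$, which implies the claim. Note this estimate holds for every admissible $\bb$, so one need not first invoke Proposition \ref{prop3} to reduce to a concentrated attack, though one could, obtaining the same conclusion.

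There is no genuine obstacle in this argument: the only nonroutine step is spotting the test vector $[\pmb{1}_N^\top, -N]^\top$, which is essentially dictated by the structure of $L_{aug}^u$ — zero row sums together with a vanishing intruder row — so that the quadratic form collapses to $(1+N)\sum_i b_i$. The remaining steps (the block expansion and the use of $L\pmb{1}_N = \pmb{0}$ and $\sum_i b_i = -c$) are mechanical, and the same template proves Proposition \ref{prop2} for the bidirectional Laplacian $L_{aug}^b$, where the identical $\bX$ gives $\bX^\top L_{aug}^b \bX = -(N+1)^2 c$ and hence $f(L_{aug}^b) \leq -(N+1)c/N \leq 0$.
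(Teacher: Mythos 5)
Your proof is correct and takes essentially the same route as the paper: both exhibit a test vector orthogonal to $\pmb{1}$ that separates the intruder from the rest of the network (your $[\pmb{1}_N^\top,\,-N]^\top$ is, up to sign and the ordering of the intruder coordinate, the paper's $\bX_0=[N,\,-1,\dots,-1]^\top$) and bound $f$ by the Rayleigh quotient, arriving at the identical value $-(N+1)c/(N^2+N)$. The only difference is that you write out the block expansion explicitly, which the paper leaves implicit.
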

\begin{proof}
    See Methods \ref{sec:prop4}.
\end{proof}

{Propositions \ref{prop1} and \ref{prop3} are used in the next subsections \,\ref{sec:balanced} and \ref{sec:general} to characterize the effect of the particular selection of the network nodes to which the intruder should connect, $i=1,..,N$, in order to maximize the severity of the attack.}

Supplementary Note 3 discusses the case of bidirectional attacks in undirected networks with symmetric Laplacian matrices for which we also prove that the optimal attack is to allocate the entire budget to one node.

\subsection{Balanced Directed Graphs} \label{sec:balanced}

Next, we investigate in detail the case of interest that the directed graph is balanced, i.e., for each one of its nodes, the indegree equals the outdegree, $d_i=d_i^{in}=d_i^{out}, i=1,..,N$ ($d_i$ is simply the degree of node $i$.) Undirected graphs form a particular class within the broader class of balanced directed graphs. 
{
Based on Propositions \ref{prop1} and \ref{prop3}, we know that the optimal attack is to allocate the entire budget on one connection only. 
In what follows, we study the attack to node $i$ of a given network, i.e., we set $\bb = [b_j], b_i = -c, b_{j\neq i} = 0$.
}
We are interested in how the algebraic connectivity $f$ of a balanced directed graphs is affected by an intruder attack. 
By using matrix perturbation theory, we study {Problem 1} in the limit of very small and very large budget $c$ and obtain that:
\begin{itemize}
    \item for small budget $c$: the algebraic connectivity $f(L_{aug}^u) = -0.1c$, and
    \item for large budget $c$: the algebraic connectivity $f(L_{aug}^u) = -\frac{\sqrt{2}+1}{2}c + (3+2\sqrt{2})  L_{ii}$.
\end{itemize}
For details on the derivations see Sec.\,\ref{sec:uni}.
Note the dependence on the degree of  node $i$, $d_i=L_{ii}$.


Using a similar approach, we can also study {Problem 2} and obtain that:

\begin{itemize}
    \item for small budget $c$: the algebraic connectivity $f(L_{aug}^b) = -1.1c$, and
    \item for large budget $c$: the algebraic connectivity $f(L_{aug}^b) = -2c + L_{ii}/2$.
\end{itemize}
For details of the derivations see Sec.\,\ref{sec:bi}. Also in this case, note the dependence on the degree of  node $i$, $d_i=L_{ii}$.


Figure\,\ref{fig:directed} summarizes the results of this section, by providing a schematics for how the algebraic connectivity $f$ varies with the magnitude of the budget $c$. 
The top (bottom) panel is for the case of Problem 1 (Problem 2.) 
{The figure shows that our calculations in the small and large budget limits can be used to sketch the dependence of the algebraic connectivity on $c$. In both panels, it is shown that the slopes in the small- and large-c limits are constant, while the intercept of the large-$c$ asymptote increases with the node degree. This indicates that the algebraic connectivity is minimized when the node degree is minimal.} {Let us further remark that the above result should be interpreted in the context of the adopted Laplacian. More precisely, in the appropriate cost limit, the optimal node ranking is determined by the diagonal entries $L_{ii}$ of the adopted coupling operator. For example, for the symmetrically normalized Laplacian \cite{chung1997spectral}, $L_{ii}=1$ for all $i$, and therefore this distinction between nodes vanishes.}



\begin{figure}
    \centering
    \includegraphics[width=0.8\linewidth]{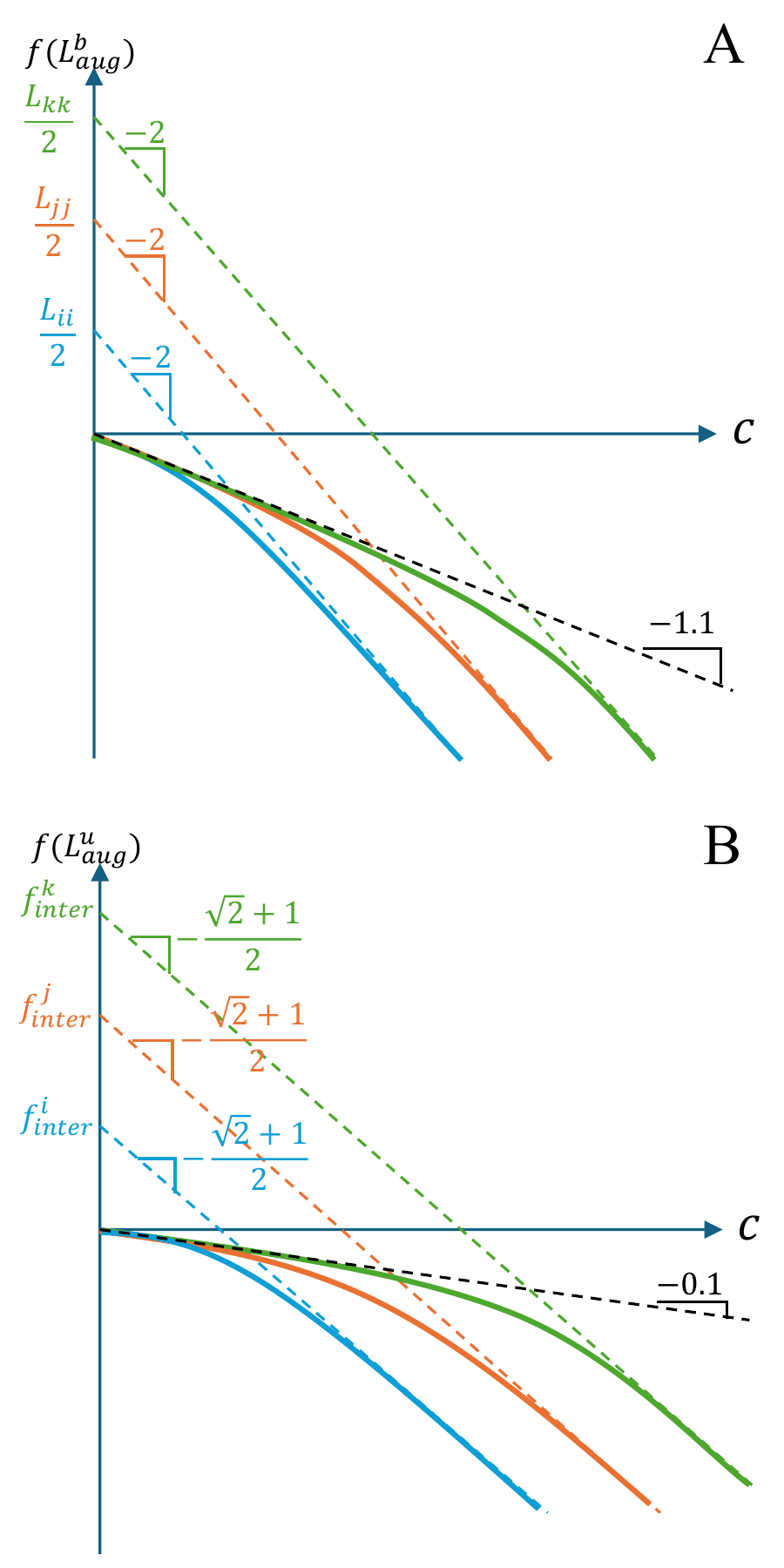}
    \caption{\textbf{Schematic dependence of the algebraic connectivity on the attack budget.}
Panels A and B schematically illustrate the algebraic connectivity, $f(L_{aug}^b)$ and $f(L_{aug}^u)$, of the asymmetric Laplacian matrices $L_{aug}^b$ and $L_{aug}^u$ defined in Eqs.~\eqref{eq:Lnewbi} and \eqref{eq:Lnewuni}, corresponding to directed graphs with bidirectional (panel A) and unidirectional (panel B) adversarial connections, respectively. In each case, a single node (either $i$, $j$, or $k$) is attacked with budget $-c<0$. The dashed black lines show the asymptotic relations $f(L_{aug}^b)=-1.1c$ (panel A) and $f(L_{aug}^u)=-0.1c$ (panel B). The colored dashed lines in panel A correspond to
$f(L_{aug}^b)=-2c+L_{xx}/2$, whereas those in panel B correspond to
$f(L_{aug}^u)=-(\sqrt{2}+1)c/2+f_{\mathrm{inter}}^x$, for $x=i,j,k$, where the intercept is
$f_{\mathrm{inter}}^x=(3+2\sqrt{2})L_{xx}$. The schematics are not drawn to scale and are intended only to illustrate the qualitative behavior.}
    \label{fig:directed}
\end{figure}

\subsection{General Directed Graphs} \label{sec:general}

The overall behavior of the algebraic connectivity $f$ for a general digraph (i.e., a digraph that is not balanced) subject to an adversarial node addition is more difficult to characterize. 
In the limit of a large budget $c$, we derive the same relationships we had previously obtained in the case of balanced digraphs: 
    $f(L_{aug}^b) = -2c + L_{ii}/2$ in the case of a bidirectional adversarial node addition and
    $f(L_{aug}^u) = -\frac{\sqrt{2}+1}{2}c + (3+2\sqrt{2})  L_{ii}$ in the case of a unidirectional adversarial node addition,
with the only difference that in this case the dependence is on the indegree $d_i^{in}=L_{ii}$ of a node, rather than its degree.
The above relations  indicate that for any directed graph, in the limit of a large budget $c$, the algebraic connectivity $f$ is defined by the indegree of the targeted node.

In general, the curve $f(L_{aug})$ vs $c$ in the limit of small $c$, i.e., the initial slope of the curve for small $c$, depends non-trivially on the choice of the targeted node. 
However, certain relations can still be derived for the average slope in directed graphs, which is discussed next.

We direct our focus to the change in the algebraic connectivity in the limit of small $c$,
\begin{equation}
    \frac{d f}{d c} := \lim_{c \to 0^-} \dfrac{\tilde{f} - f_{new}}{c}.
\end{equation}
Here, $c \geq 0$ is the budget, and $f_{new}$ ($\tilde{f}$) is the algebraic connectivity corresponding to the Laplacian matrix $L_{aug}$ before the attack, i.e., to the matrix $\tilde{L}$:
\begin{align}
\begin{split}
    \tilde{L} = &
    \begin{bmatrix}
    L & \pmb{0} \\ 
    \pmb{0}^\top &  0
    \end{bmatrix}.
\end{split}
\end{align}
We consider the two cases of (i) an attack via bidirectional connection ($L_{aug}^b$ in Eq.\,\eqref{eq:Lnewbi}), and (ii) an attack via unidirectional connection ($L_{aug}^u$ in Eq.\,\eqref{eq:Lnewuni}).
We target all nodes of the network one by one and calculate $d f / d c$ for each targeted node, i.e., we set $\bb_i = -c \pmb{e}_i$ where $\pmb{e}_i$ has all zero entries except for entry $i$ which is equal to 1.
Then, we calculate 
the average slope $< df / d c >$:
\begin{equation}
    \left< \frac{d f}{d c} \right> := \dfrac{1}{N} \sum_{i=1}^N \frac{d f_i}{d c},
\end{equation}
where $f_i$ is the algebraic connectivity when $\bb_i$ is used in either Eqs.\,\eqref{eq:Lnewbi} or \eqref{eq:Lnewuni}.

Next, we investigate the algebraic connectivity for selected real networks,
 such as animal, biological, social, neural, trade, metabolic, and genetic networks.
The network data is obtained from \cite{konect}. {All the adjacency matrices have positive entries, with certain networks being weighted and certain being unweighted (detailed information on these networks can be found in Supplementary Note 4.)} 
For each network dataset, we apply our analysis to their largest strongly connected component. 

{We emphasize that the slope $d f_i / d c$ is not easy to characterize for non-balanced digraphs since the slope $d f_i / d c$ and the indegree of the node $i$ may not have a monotonic relationship.
This is shown in Supplementary Note 4 using two examples of animal networks; for one, the slope is directly related to the indegree, while for the other, the slope is not correlated with the indegree.}

Figure\,\ref{fig:realnet} shows the mean slope $< df / d c >$, averaged over all the network nodes, for a collection of real networks under unidirectional attacks. 
We see the emergence of a clear scaling relation with the size of the network $N$, namely $< df / d c > = -1/N$. 
This result is 
analytically proven in
Sec.\,\ref{sec:meanslope}. We conclude that larger networks are less reactive in average to intruder attacks.
We have also computed the most negative slope $\min (d f / d c):= \min_i d f_i / d c$ and the mean slope $< df / d c >$ under bidirectional and unidirectional attacks, and those plots are provided in Supplementary Note 4.

\begin{figure}
    \centering
    \includegraphics[width=0.6\linewidth]{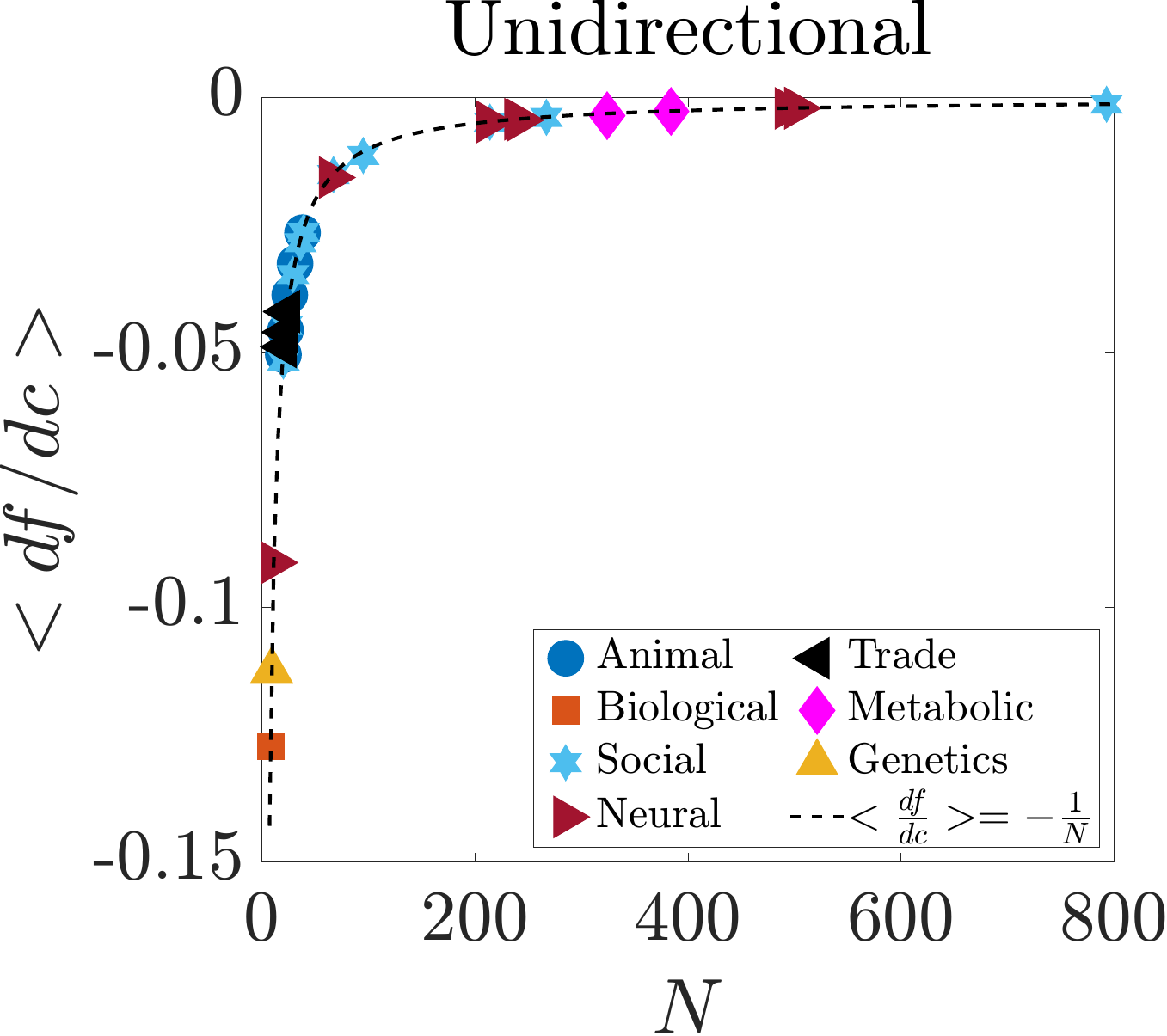}
    \caption{\textbf{Mean slope $\langle df/dc \rangle$ for unidirectional attacks on real-world networks.}
The mean slope $\langle df/dc \rangle$ for unidirectional (Eq.\,\eqref{eq:Lnewuni}) connections of the adversarial agent for selected real-world networks of size $N$. The dashed black curve shows the analytically predicted relation $\langle df/dc \rangle = -1/N$.
    } 
    
    \label{fig:realnet}
\end{figure}

\subsection{Applications} \label{sec:applications}

Although our results are derived for the case of the consensus dynamics \eqref{consensus},  they have direct implications in the case of several real-world applications, described by the general set of equations \eqref{general}.
Below we discuss three applications of interest in the realm of man-made networks. 
{A detailed asymptotic stability analysis for each application is provided in the Methods. Specifically, the synchronization dynamics is analyzed in Methods Sec.\ \ref{Sec:MAI}, the power grid dynamics in Methods Sec.\ \ref{Sec:MAII}, and the formation control dynamics in Methods Sec.\ \ref{Sec:MAIII}.}


As a first application, we first consider  synchronization of coupled chaotic oscillators.
A model for the emergence of synchronization in a network of coupled oscillators is the following,
\begin{equation} \label{eq:synch}
    \dot{\bx}_i (t) = \bF(\bx_i(t)) - \sigma \sum_{j = 1}^N L_{ij} \bH(\bx_j (t)),
\end{equation}
where $\bx_i(t) \in \mathbb{R}^m$ is the state of oscillator $i$ at time $t$, the functions $\bF: \mathbb{R}^m \to \mathbb{R}^m$ and $\bH: \mathbb{R}^m \to \mathbb{R}^m$ describe the local dynamics and the coupling functions, respectively. 
The matrix $L = [L_{ij}]$ is the Laplacian matrix that describes the network connectivity, and the scalar $\sigma \geq 0$ is the coupling strength. For this example, we set  the local dynamics $\bF$ to be the chaotic Lorenz oscillator. The functions $\bF$ and $\bH$ are
\begin{equation} \label{eq:Lorenz}
        \bx = \begin{bmatrix}
        x \\ y \\ z
    \end{bmatrix}, \ \
    \bF(\bx) = \begin{bmatrix} 10(y-x) \\
    x(28 - z) - y \\
    xy - 2 z \end{bmatrix}, \ \  \bH(\bx) = \begin{bmatrix}
        0  \\
        y  \\
        0 
    \end{bmatrix}.
\end{equation}

We take the directed network shown in Fig.\,\ref{fig:application} A and set $\sigma=6$.
Figure \ref{fig:application}B shows the synchronous time evolution of the state components $x_i(t)$ for each oscillator before the attack. Panel C shows the loss of synchronization observed after the attack.
The loss of stability is the direct consequence of having mixed sign eigenvalues of the Laplacian matrix after the attack. 
The plots of the $y$ and $z$ components show similar behavior as the $x$ component and can be found in Supplementary Note 5. 
In Supplementary Note 5 we also provide an explanation for why the synchronous state becomes unstable when an attack occurs in terms of the {master stability function approach (MSF) \cite{fujisaka1983stability,pecora1998master}}.

Reference \cite{sorrentino2007controllability} has used the master stability function to study stability of the synchronous solution in the case of the pinning control problem, for which a reference synchronous trajectory is selected to which all the network nodes need to synchronize. 
{There are, however, two main differences between the two approaches.} { {The first concerns the setting of the problem:} in pinning control one {considers} the role of a reference agent in driving the synchronization of a network of coupled systems {toward} the reference trajectory, whereas in our work {we consider} the role of an intruder agent in disrupting synchronization. In the former case the coupling of the reference agent to the rest of the network is cooperative, 
while in the latter it is adversarial. 
{In particular, the insertion of an intruder effectively alters the coupling structure so that the relevant Laplacian has at least one negative part eigenvalue; as a result, the asymptotic stability analysis naturally involves the evaluation of the Master Stability Function at negative real arguments, corresponding to these eigenvalues, which is not the common approach in the literature. } The Methods Sec.\ I provides master stability function plots, over both negative and positive values of its argument, for the cases of the Lorenz, Rossler, and Chua oscillators, under various node-to-node couplings. For all the cases considered, we see that the MSF is positive when its argument is negative, implying instability of the synchronous solution.
{The second difference concerns the observable used to characterize the effect of the intervention: while Ref.\ \cite{sorrentino2007controllability} focuses on asymptotic stability through Lyapunov exponents, here we focus on the transient rate of instability. }} 



\begin{figure*}
    \centering
    \includegraphics[width=0.8\linewidth]{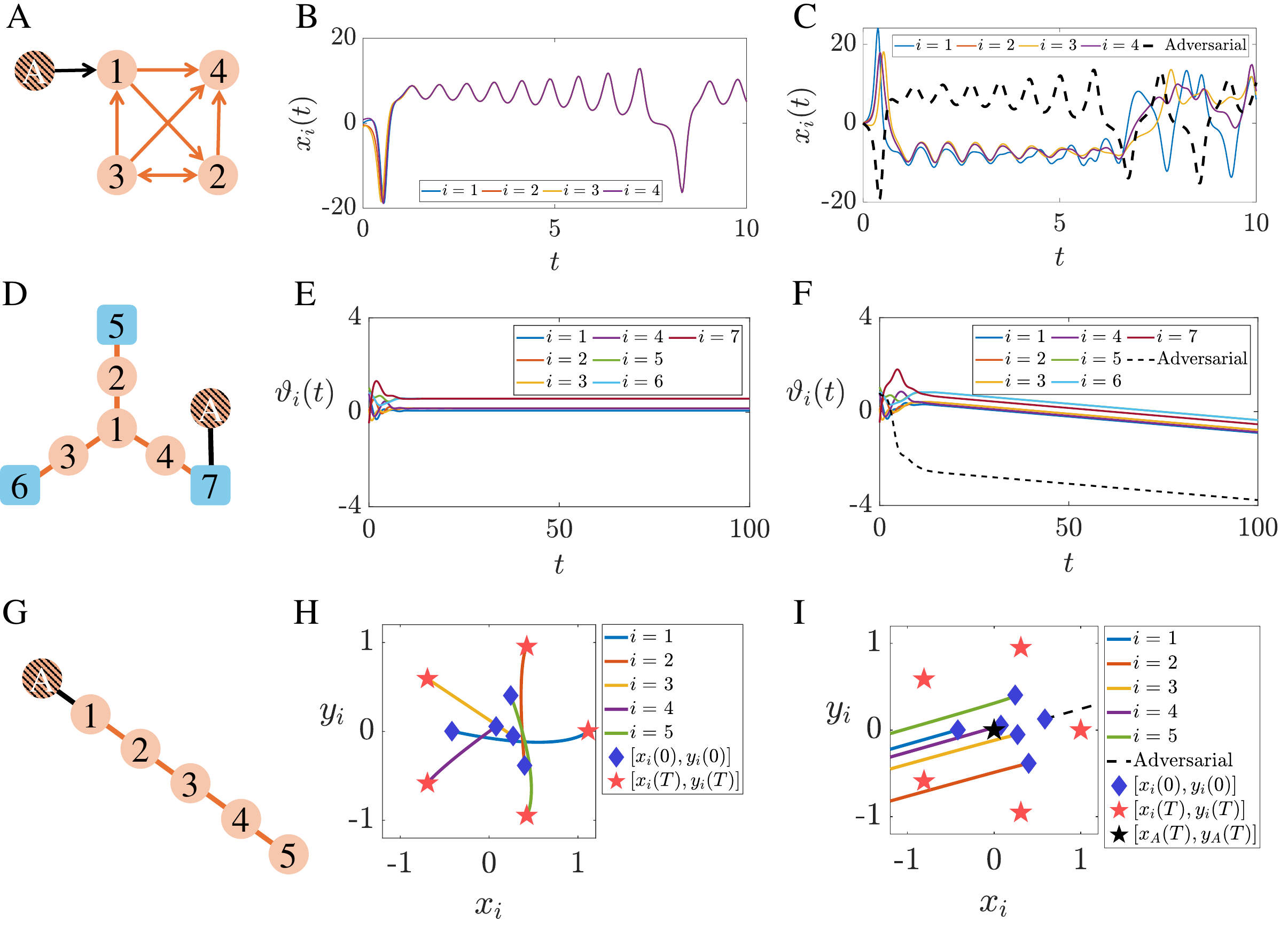}
    \caption{
    \textbf{Effects of intruder attacks in different applications.}
    The top row (A--C) illustrates the application to directly coupled chaotic oscillators, the middle row (D--F) to the power grid described by the nonlinear swing equation, and the bottom row (G--I) to linear single-integrator robot formation control. The left column (A, D, G) shows the corresponding network topologies. Panel A displays a directed network, whereas panels D and G show undirected networks. In all networks, black and orange links denote edge weights of $-1$ and $1$, respectively. The middle column (B, E, H) shows the dynamics in the absence of the adversarial agent (node A), while the right column (C, F, I) shows the dynamics in its presence.
In panel D, blue square nodes represent generators and orange circular nodes represent load nodes; the adversarial agent is also a load node. Panels B and C show the states $x_i(t)$ of the Lorenz oscillators; panels E and F show the angular positions $\vartheta_i(t)$ of the generators and loads in the swing equation; and panels H and I show the robot positions in the $xy$ plane, where diamonds denote the initial positions and stars denote the desired final positions.
    }
    \label{fig:application}
\end{figure*}

Next, we study the effect of a unidirectional attack on a scale-free network of Lorenz oscillators coupled in the $y$-variable, Eq.\ \eqref{eq:Lorenz}. {For this case, the MSF analysis for Eq.~\eqref{eq:Lorenz}, presented in Sec.\ I of the Methods, shows that, for negative values of its argument, the master stability function becomes increasingly negative as the magnitude of the argument increases.}
We use the algorithm from \cite{goh2001universal} to randomly generate an undirected and unweighted scale-free graph, which is shown in Fig. \ref{fig:synch_node} A. 
The network has $N=100$ nodes and the degree distribution follows a power law $P_D(k) \propto k^{-\gamma}$ with $\gamma = 2.33$ with an average degree of $10$.
We aim to study the effect of the attack on different nodes; we set $\sigma = 10$, $c = 1$, and vary the node $i$ that is targeted.
We select the nodes $i = 1,5,17,94,100$ 
with degrees $82, 47, 23, 13$, and $8$, respectively, and associate to these nodes the colors blue, orange, yellow, purple, and green, respectively. 
For more details on this example, see Supplementary Note 6.

The perturbation vector for each node is calculated as $\delta \bx_i(t) = \bx_i (t) - \frac{1}{N} \sum_{j = 1}^N \bx_j \in \mathbb{R}^3$. 

We then form the transverse perturbations vector $\delta \bX(t) = [\delta \bx_1(t)^\top, \ \delta \bx_2(t)^\top, \hdots, \delta \bx_{N}(t)^\top]^\top \in \mathbb{R}^{3N}$ and calculate its norm:
\begin{equation} \label{eq:trans_synch}
    \| \delta \bX(t) \| := \sqrt{\delta \bX(t) ^\top \delta \bX(t)}.
\end{equation}
\color{black}
If the oscillators synchronize, then $\| \delta \bX(t) \| \to 0$, otherwise $\| \delta \bX(t) \| \gg 0$.

We see from Fig.~\ref{fig:synch_node} B that before the attack, i.e., on the left side of the dashed black line, the norm of the transverse perturbations approaches zero, indicating that the Lorenz oscillators have synchronized. However, after attacking either one of the nodes $1,5,17,94,100$,
we see an immediate loss of synchronization, with a faster rate of transient instability when the nodes with lower degrees are targeted. For example, the attack on node 100 with degree 8 (the green curve) results in much larger $\| \delta \bX(t) \|$ than the attack on node 1 with degree 82 (the blue curve).
To conclude, this example shows that in a nonlinear networked system, lower indegree nodes are more vulnerable to intruder attacks. {Note that this is consistent with the MSF analysis for the case of Eq.~\eqref{eq:Lorenz}, presented in Sec.\ I of the Methods, since attacking nodes with lower indegrees corresponds to evaluating the master stability function at more negative values of its argument, for which the associated maximum Lyapunov exponent is larger.}



\begin{figure}
    \centering
    \includegraphics[width = \linewidth]{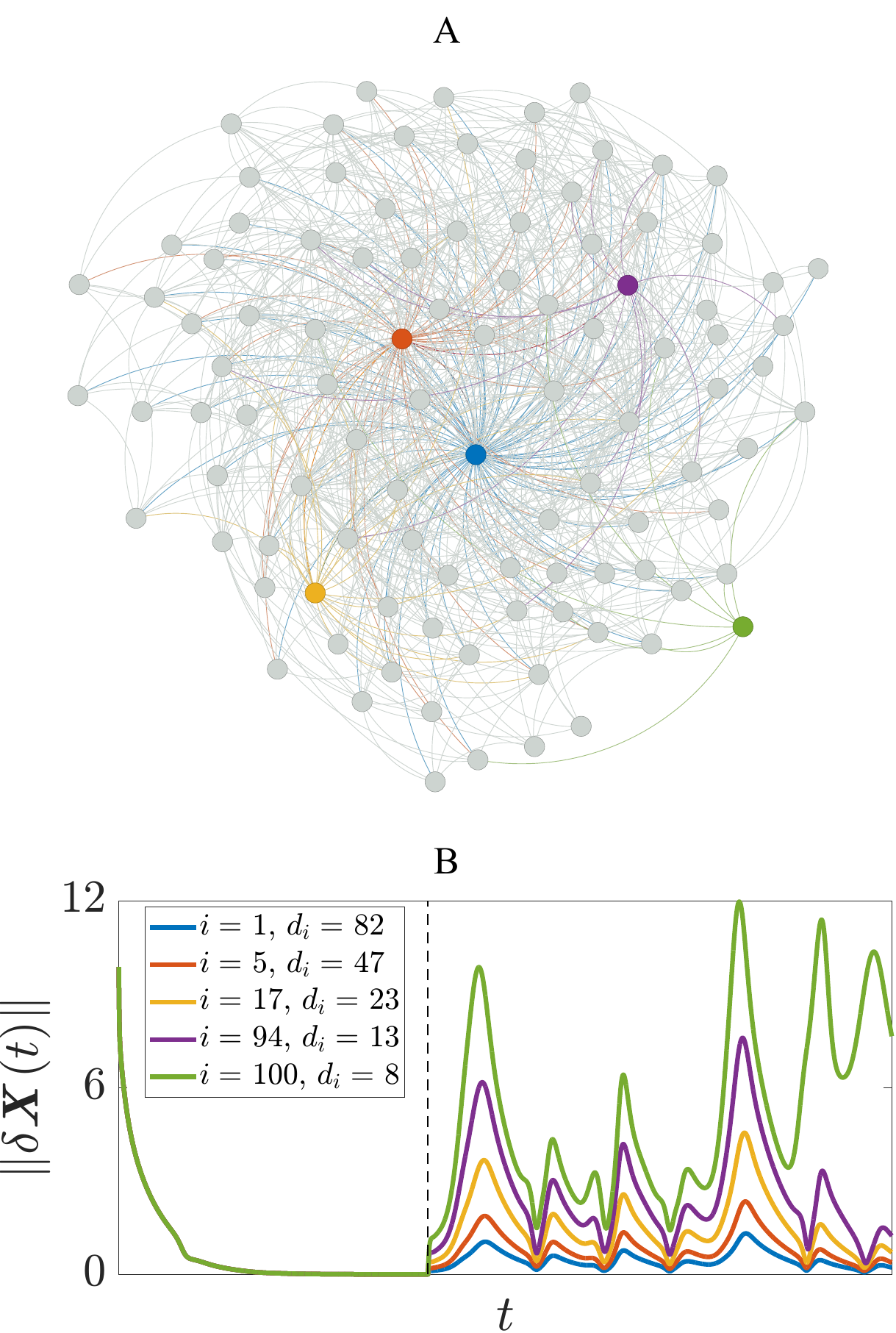}
    \caption{\textbf{Effect of the attacked node on network destabilization.}
Panel A shows a randomly generated scale-free network under attack through either of the colored nodes (blue $i=1$, orange $i=5$, yellow $i=17$, purple $i=94$, and green $i=100$). Panel B considers the dynamics of a network of coupled Lorenz oscillators, Eq.~\eqref{eq:synch}, and plots the norm of the transverse perturbations $\|\delta \mathbf{X}(t)\|$ as a function of time $t$ for attacks targeting different nodes $i$. Each curve corresponds to a different attacked node, all with the same attack budget $c=1$. The dashed black line indicates the onset of the attack at $t=4\,\mathrm{s}$. The initial conditions of the oscillators are identical in all simulations, so that the observed differences are solely due to the attacked node. The degree of each attacked node, $d_i$, is also indicated in the plot.}
    \label{fig:synch_node}
\end{figure}

As a second application we consider power grid dynamics. The swing equation is  a classical model for the dynamics of a power grid, \cite{nishikawa2015comparative,bhatta2021modal}
\begin{equation} \label{eq:swingsimple}
     \ddot{\vartheta}_i (t) = p_i - {\gamma} \dot{\vartheta}_i (t) - \sum_{j= 1, j \neq i}^N A_{ij} \sin(\vartheta_i (t) - \vartheta_j (t))
\end{equation}
where $\vartheta_i (t)$ is the angular displacement of node/rotor $i$, the scalar $p_i$ denotes the power consumption ($p_i <0$ for loads) or power generation ($p_i >0$ for generators,) the scalar ${\gamma >0}$ denotes the damping ratio, and the symmetric adjacency matrix $A = [A_{ij}]$ describes the connectivity of the grid, i.e.,  $A_{ij}=A_{ji}=1$ ($0$) if there is (is not) a transmission line between between nodes $i$ and $j$. 
{Note that although Eq.\ \eqref{eq:swingsimple} is not a particular case of Eq.\ \eqref{general}, the linearized swing equation (see Supplementary Note 7) on which our analysis is based is.}



{We emphasize that the swing equation \eqref{eq:swingsimple} conveniently demonstrates how an intruder oscillator can be mathematically incorporated into the Laplace matrix through a negative sign. By constraining the intruder oscillator to remain in anti-phase with the rest of the network—i.e., \(\sin(\vartheta_i(t) - \vartheta_j(t) \pm \pi) = -\sin(\vartheta_i(t) - \vartheta_j(t))\)—a negative sign naturally emerges in the coupling kernel. This negative sign is then absorbed into the adjacency matrix as a negative link connection, represented by \(-A_{ij}\).} 

To illustrate the effect of the addition of an adversarial agent, we provide a numerical example. 
Here, we set 
$\gamma = 0.9$ and the network shown in Fig.\,\ref{fig:application} D shows the grid topology before and after the addition of the adversarial agent. 
Nodes $1,2,3$ and $4$ are loads, nodes $5,6$ and $7$ are generators, and the adversarial node is denoted as node A with a black connection with weight $-1$.
The vector $\bp = [
-0.3, \ -0.3, \ -0.3, \ -0.3, \ 0.4, \ 0.4, \ 0.4 
]^\top$ describes the power generated/consumed in the original power grid.
Without loss of generality, for the adversarial agent, we set $p_{A} = -0.1$, so the adversarial agent acts as a load.

As seen in Fig.\,\ref{fig:application} E (Fig.\,\ref{fig:application} F), the dynamics of the power grid converges (does not converge) to a fixed point before (after) the attack.
In Supplementary Note 7, we show analytically how the addition of the adversarial agent results in at least one unstable mode of the linearized swing equation, which in turn destabilizes the nonlinear dynamics.

As a third application we consider  distributed formation control. 
The formation control of robots is used in both civilian and military applications, see \cite{oh2015survey,wang2021formation,yu2022decentralized}.
Here, we focus on positioning a set of robots  on a given desired pattern, e.g., the robots are positioned around a circle with a known diameter.

We take the shape-based formation control strategy from \cite{Morbidi2022Functions}:
\begin{equation} \label{Eq_formation_control}
    \dot{\bX} (t) = (- F(L) \otimes I_2 ) (\bX(t) - \bbeta),
\end{equation}
where the state vector $\bX(t) = [x_1(t), y_1 (t), x_2 (t), y_2 (t), \hdots, x_N (t), y_N (t)]^\top \in \mathbb{R}^{2N}$ is the concatenation of the $xy$ coordinates of $N$ robots at time $t$.
The matrix $L$ is the symmetric Laplacian matrix that describes the connectivity between the robots, the function $F(L): \mathbb{R}^{N \times N} \to \mathbb{R}^{N \times N}$ is a function of the Laplacian matrix $L$.
The network topology for this example with $N=5$ is shown in Fig.\,\ref{fig:application} G.
We set $F(L) = I_N - \exp (-3L)$ as in  \cite{Morbidi2022Functions}.
The vector $\bbeta \in \mathbb{R}^{2n}$ is a constant vector of target positions, which we
 position around a circle with radius $1$ in equidistant phases, i.e., $\bbeta = [\sin (0), \cos (0), \sin (\frac{2 \pi}{5}), \cos (\frac{2 \pi}{5}),\hdots, \sin (\frac{8 \pi}{5}), \cos (\frac{8 \pi}{5})]^\top$.
For the adversarial agent, we set its corresponding $\bbeta$ to the origin $[0,0]^\top$.

Figure \ref{fig:application} H shows the formation control in the absence of the adversarial agent, where each robot converges to its respective target position in less than $20$ seconds. Note that this can be seen as a case of cluster synchronization, since different agents converge to different states.
However, Fig.\,\ref{fig:application} I shows that in the presence of the adversarial agent, the robots no longer reach the target positions and the closed-loop system becomes unstable, position and velocity of the robots tend to infinity.)

\subsection{Phase synchronization based on Kuramoto dynamics}

Here we consider the case of phase synchronization in a network of coupled Kuramoto oscillators \cite{KuraBOOK},
\begin{equation} \label{eq:Kuramoto}
    \dot{\theta}_i (t) = \omega_i +\dfrac{1}{N+1} \sum_{j=1}^{N+1} A_{ij} (t) \sin(\theta_j (t) - \theta_i (t)),
\end{equation}
$i = 1,\hdots, N+1$, where the scalar $\theta_i (t)$ is the phase angle of oscillator $i$ at time $t$, the scalar $\omega_i$ is the frequency of  oscillator $i$. 
We require different frequencies for different oscillators, i.e., $\omega_i \neq \omega_j$, $\forall i \neq j$.
The time-varying connectivity is described by the adjacency matrix $A(t) = [A_{ij}(t)]$ where if node $i$ receives a connection from node $j$ at time $t$, then $A_{ij}(t) \neq 0$, otherwise $A_{ij}(t) = 0$.

We assume that a unidirectional adversarial connection from the intruder to one of the network nodes is  
established at some time $T > 0$. 
More precisely, the time-varying adjacency matrix is given by,
\begin{equation} \label{eq:Akuramoto}
    A(t) = \begin{cases}
        \begin{bmatrix}
            A_0 & \pmb{0} \\
            \pmb{0}^\top & 0
        \end{bmatrix}, \quad & t < T, \\
        \\
        \begin{bmatrix}
            A_0 & \bb^i \\
            \pmb{0}^\top & 0
        \end{bmatrix}, \quad & t \geq T.
    \end{cases}
\end{equation}
Here, $\bb^i$ is the vector that characterizes the attack from the intruder to node $i$ and $A_0$ is the binary $N$-dimensional  adjacency matrix describing the network connectivity, apart from the intruder.
In what follows, we use both the norm of the vector of phase perturbations and the order parameter (both defined next) to
characterize the effect of the attack on the nonlinear dynamics of coupled Kuramoto oscillators \eqref{eq:Kuramoto}.

The norm of the vector of phase perturbations is equal to,
\begin{equation}
    \| \delta \theta(t) \| := \sqrt{\sum_{j=1 }^{N}\left(\theta_j (t) - \frac{1}{N} \sum_{k=1}^{N} \theta_k (t)  \right)^2}. 
\end{equation}
It approaches a constant value in time if oscillators $i=1,\hdots, N$ are phased-locked and varies in time otherwise.
Also, we calculate the order parameter defined as,
\begin{equation} \label{eq:order}
{
    \rho(t) = \left| \dfrac{1}{N} \sum_{j=1}^{N} e^{i \theta_j(t)} \right|
}
\end{equation}
where $i = \sqrt{-1}$. The order parameter provides a normalized index of synchronization among all the oscillators: it is $\rho \approx 1$, when the oscillators are phase-locked, and $\rho \approx 0$ when there is no synchrony.

In our first example, we set  $T=15s$, and randomly select the frequencies $\omega_i$ from a normal distribution with a mean $0.5$ and a standard deviation $0.01$.
The initial conditions $\theta_j(0), \ j=1,\hdots, N$, are drawn from a standard normal distribution and then normalized such that $\sum_{j=1}^{N} \theta_j(0)^2 = 1$.
We set the initial condition of the intruder to be $\theta_{N+1} (0) = 0$.
The network topology and the attacks are given by, 
\begin{equation} \label{eq:bKuramoto}
    A_0 = \begin{bmatrix}
    0 & 1 & 0 \\ 
    1 & 0 & 1 \\ 
    1 & 1 & 0 
    \end{bmatrix}, \quad \bb^1 = \begin{bmatrix}
    -1 \\ 
    0 \\ 
    0 
    \end{bmatrix}, \quad \bb^2 = \begin{bmatrix}
    0 \\ 
    -1 \\ 
    0 
    \end{bmatrix}.
\end{equation}
We compare the two cases of an attack on node $1$ with indegree one and node $2$ with indegree two.
The former (latter) case corresponds to using $\bb^1$ ($\bb^2$) in Eq.\,\eqref{eq:Akuramoto}.
We repeat this numerical experiment 500 times and evaluate both $ \| \delta \theta(t) \|$ and $\rho$.

Figure \ref{fig:Kuramotorandom} shows that before the attack the oscillators become phase-locked and that after the attack phase locking is lost. 
We see a greater divergence rate both in the norm of the phase perturbation vector and in the order parameter, see panels \ref{fig:Kuramotorandom} A and B, respectively, when node 1 is attacked, compared to the case when node 2 is attacked.
This indicates that attacks targeting low-degree nodes result in a higher rate of transient instability.

\begin{figure}
    \centering
    \includegraphics[width=0.7\linewidth]{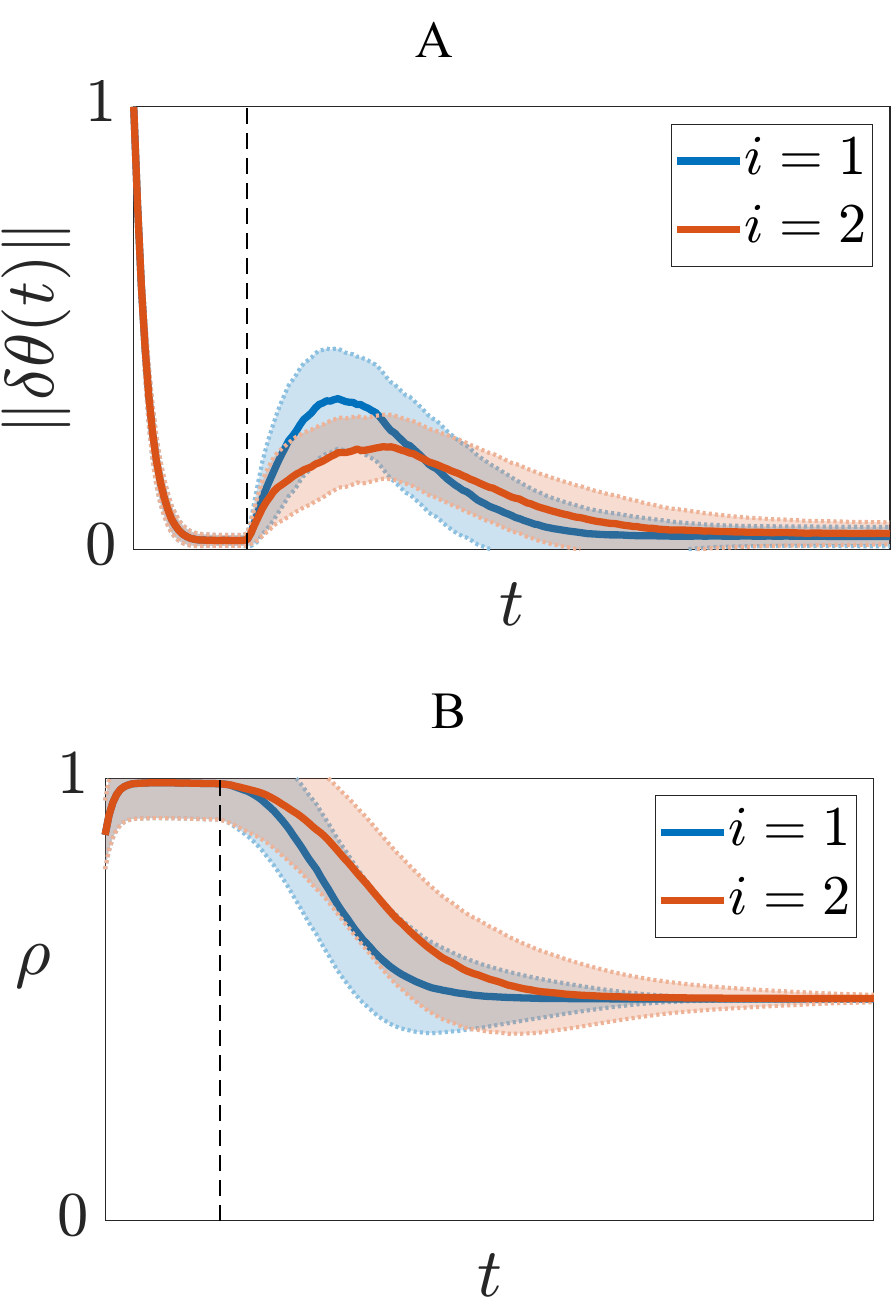}
    \caption{\textbf{Effect of the attacked node on the dynamics of a network of Kuramoto oscillators.}
We consider a network of Kuramoto oscillators and plot the norm of the phase perturbation vector, $\|\delta\theta(t)\|$ (panel A), and the order parameter, $\rho(t)$ (panel B), as functions of time $t$. The blue curves correspond to an attack on node $i=1$, i.e., $\mathbf{b}^{1}$ is selected in Eq.~\eqref{eq:bKuramoto}, whereas the red curves correspond to an attack on node $i=2$, i.e., $\mathbf{b}^{2}$ is selected. The dashed black line marks the onset of the attack at $t=T=15\,\mathrm{s}$. Prior to the attack, the blue and red curves are identical since the same initial conditions are used in both simulations. The results show the averages of $\|\delta\theta(t)\|$ (panel A) and $\rho(t)$ (panel B) over 500 realizations, each corresponding to a different random choice of the initial conditions and the natural frequencies of the oscillators, as described in the text. The shaded regions indicate one standard deviation across the realizations.}
    \label{fig:Kuramotorandom}
\end{figure}

We repeat the experiment above for a randomly generated scale-free network of coupled Kuramoto oscillators, which we choose to be the network of Fig. \ref{fig:synch_node} (reproduced for clarity in \ref{fig:Kuramoto_SF} A.)
Panels B and C show that before the attack the oscillators become phase-locked and that after the attack phase locking is lost. When  nodes of lower indegree are attacked,
we see a greater divergence rate both in the norm of the phase perturbation vector and in the order parameter, see panels \ref{fig:Kuramoto_SF} A and B, respectively. 
This confirms once again that nodes with lower indegrees (e.g., node 100) are more vulnerable to intruder attacks.

{A detailed asymptotic stability analysis for the case of Kuramoto dynamics is provided in the Methods Sec.\ \ref{Sec:MAIIII}.}

\begin{figure*}
    \centering
    \includegraphics[width = \linewidth]{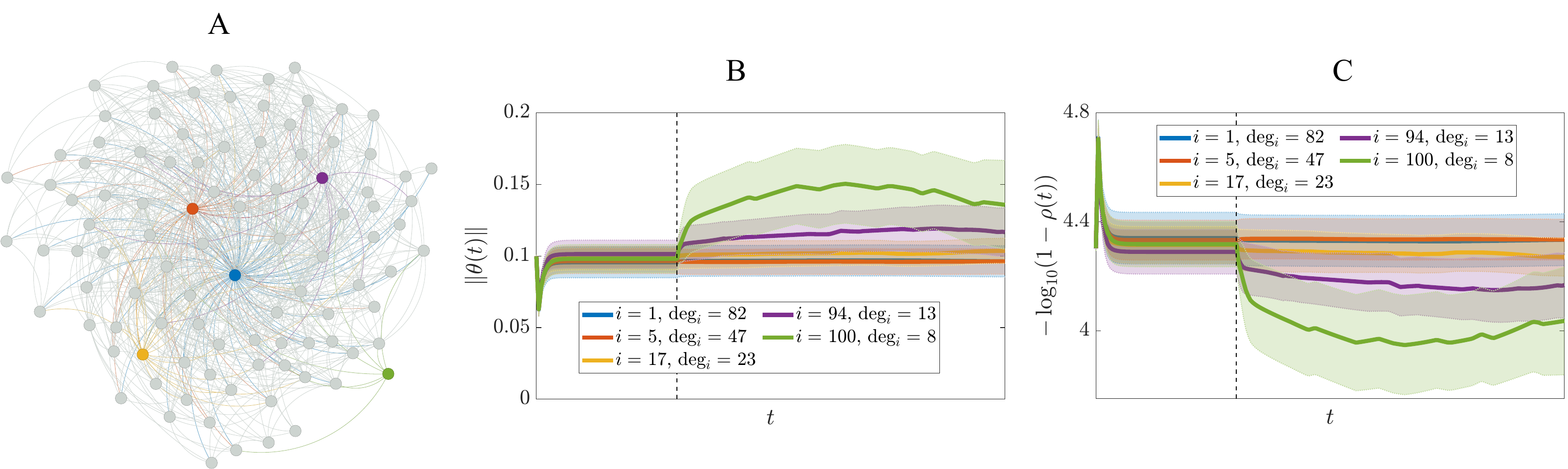}
    \caption{\textbf{Effect of the attacked node on destabilization of Kuramoto dynamics.}
    Panel A shows a randomly generated scale-free network under attack through either one of the colored nodes with different indegrees (blue $i=1$ with degree $82$, orange $i=5$ with degree $47$, yellow $i=17$ with degree $23$, purple $i=94$ with degree $12$, and green $i=100$ with degree $8$). 
    In panel B, we integrate the Kuramoto dynamics of Eq.\,\eqref{eq:Akuramoto} on this network, and plot the norm of the phase perturbations vector $ \| \delta \theta(t) \|$ as a function of time $t$ and for different choices of the attacked node $i$. 
    In panel C, we plot the order parameter $ \rho(t)$, Eq.\,\eqref{eq:order}, as a function of time $t$ and for different choices of the attacked node $i$. Different curves correspond to different attacked nodes, all with the same budget $c = 1$. 
    The dashed black line denotes the start time of the attack. 
    The initial conditions of the oscillators are chosen randomly from the uniform distribution $[0, \ 0.1]$. 
    The plot also provides the degree of node $i$ as $\deg_i$.
    The highlighted background denotes the standard deviation over 20 realizations.
    The natural frequencies of the oscillators are randomly drawn from a Normal distribution with a mean of 0.1 and a standard deviation of 0.001.
    }

    \label{fig:Kuramoto_SF}
\end{figure*}

\begin{remark}
    Our empirical results indicate that in the case of more than one intruder, the optimal attack, i.e., the attack that minimizes the algebraic connectivity, is the one with all the budget concentrated in one link from one of the intruders toward the node with the lowest indegree and with all the other intruders that remain disconnected.
    The particular choice of the intruder makes no difference, since all are identical before the attack.
    An extensive discussion on this point is provided in the Supplementary Note 2.
\end{remark}
\begin{remark}
    We have performed a numerical study over our randomly generated scale-free network, Fig.\ref{fig:Kuramoto_SF}A.
    We have seen that the following values of the budget can be considered `small' and `large' in our asymptotic analysis:
    \begin{itemize}
        \item the small budget should be $c < 10^{-2}$ (independent of the attacked node), 
        \item the large budget should be $c > 0.1 L_{ii}^2$ (dependent on the attacked node),
    \end{itemize}
    where $L_{ii}$ is the indegree of the attacked node.
    The details of our analysis are available in Supplementary Note 8.
\end{remark}

In Supplementary Note 9, we compare the effects of an intruder attack on two leaf nodes (nodes with degree 1): one connected to the hub node (the node with highest degree) and one not connected to the hub. By examining the dynamics immediately after the attack, we do not observe any difference between the two cases in their transient response.
\color{black}

\section{Discussion} \label{sec:conclusions}

 This work identifies a fundamental vulnerability of complex dynamical networks: a single node connected through adversarial links to one or more other nodes is sufficient to destabilize the network dynamics, regardless of the network size. Such an attack provides a generic mechanism for disrupting consensus in opinion dynamics, synchronization in oscillator networks, and desired formations in coupled autonomous systems. Furthermore, we find that targeting a node with the smallest indegree, {defined as the sum of the weights of all incoming connections}, typically destabilizes the network in the shortest time. In contrast, attacking hub nodes generally results in a slower onset of instability. Intuitively, this can be explained as follows: hubs respond less to the coupling with the intruder due to a large number of in-neighbors, while low-indegree nodes respond more.

The practical implementation of negative (adversarial) interactions depends on the application domain, but is in general well grounded across a wide range of systems. 
In technological and engineering networks, the distinction between cooperative and adversarial interactions often reduces to the sign of a feedback gain. In standard Laplacian coupling, the interaction acting on node $i$ takes the form $g (x_{\mathrm{ref}} - x_i)$, where $x_{\mathrm{ref}}$ is a reference state, $x_i$ is the state of node $i$, and $g$ is a scalar gain. Cooperative interactions correspond to $g > 0$, whereas adversarial interactions correspond to $g < 0$. Thus, changing the sign of the feedback gain naturally switches between the two regimes. In practice, such sign changes may arise from faulty or malicious agents, incorrect controller design, or cyber-attacks that invert transmitted signals. In power grid applications, for example, adversarial coupling can be realized through devices enforcing anti-phase behavior, as discussed in Subsection II.C.
In biological systems, negative interactions are ubiquitous and arise intrinsically. Examples include inhibitory synaptic coupling in neuronal networks, competitive interactions in ecological systems, and repression mechanisms in gene regulatory networks. In these settings, adversarial interactions are not externally imposed, but are a fundamental component of the system dynamics. Within this perspective, our framework captures the effect of introducing or amplifying inhibitory interactions in otherwise cooperative networks.
In social systems, negative coupling emerges through antagonistic or contrarian behavior. For instance, in opinion dynamics models, some individuals may systematically oppose the state of their neighbors (so-called ``contrarians''), effectively inducing negative interactions. Likewise, misinformation or adversarial agents in online networks can disrupt consensus by promoting opposing views. These mechanisms provide realistic pathways through which adversarial interactions can arise and influence collective dynamics.

 Our results are in stark contrast to the expectation that hubs are the most critical nodes of a network \cite{artime2024robustness}, and thus the ones that need to be more closely protected from attacks. 
 {This assumes particular relevance in the case of scale-free networks \cite{barabasi1999emergence}.
 A central dogma in the field of complex networks is that scale-free networks are stable against random failures (removal) of nodes but are vulnerable, in terms of structural integrity (i.e., the size of the strongly connected component) to attacks targeting the hubs. 
We show a significantly different picture for the case of dynamical attacks considered here: while the structure of scale-free networks is more vulnerable to attacks targeting the hubs \cite{artime2024robustness}, their dynamics is more vulnerable to attacks targeting nodes with low-indegree.
In summary, while hubs exhibit high vulnerability in terms of structural integrity, they are somewhat protected in dynamical contexts, leading to a situation where low-degree nodes are most vulnerable. Reference \cite{liu2011controllability} had also shown that the structural controllability of a network was enhanced when driver nodes are not connected to high-degree nodes. {Reference  \cite{menichetti2014network} had found that the structural controllability of a network depends strongly on the fraction of
low in-degree and low out-degree nodes. } However, destabilizing the network dynamics does not require controllability, and here we provide simple sufficient conditions to induce instability of this dynamics. Our work complements a large literature on the importance of low-degree nodes and weak ties \cite{granovetter1973strength}; the relevance of low-degree nodes, under specific circumstances, is also acknowledged in the literature on percolation dynamics on networks  \cite{artime2024robustness} and {the importance of weak ties on stability for predator–prey networks has been pointed out in \cite{allesina2012stability}.} 
 }

 Our work complements the large literature that focuses on conditions to ensure the stability of the network dynamics, see e.g., \cite{boccaletti2006complex}, while we focus on instability and on characterizing the severity of such instability. We investigate the fundamental question of how desired dynamical states can be disrupted by the purposeful insertion of one or a few intruder agents. 
Our intruders are structurally similar to the other network agents, which is consistent with the case of cyber-attacks. 
Our results are initially derived for the case of the linear consensus problem but have implications in a broad variety of network dynamics, including opinion dynamics, synchronization, formation control, and power grid balancing. We also study the case of the Kuramoto model, 
which is relevant to synchronization of biological systems. That having been said, we do not claim that our results apply to all the possible realizations of Eq.\ \eqref{general}, but to a variety of cases of interest.

 Our work on intruder attacks in networked systems intersects with the domain of network interdiction, which has been extensively studied in operations research and optimization contexts. Smith and Song's comprehensive survey on network interdiction \cite{smith2020survey} provides a systematic overview of how adversaries can strategically disrupt network operations by targeting critical components. While traditional network interdiction focuses primarily on structural disruptions such as removing nodes or edges to maximize the disruption of flow, connectivity, or shortest paths \cite{Wood1993,IsraeliWood2002}, our approach examines dynamical disruptions through adversarial connections that destabilize the network's equilibrium states. This distinction is crucial: classical interdiction models often assume a static network structure where the adversary's goal is to disconnect components or increase path lengths \cite{SmithSong2020,CappaneraScaparra2011}, whereas our framework considers how a single intruder node can exploit the network's dynamical properties to induce instability while leaving the structural connectivity intact. 
 Our results complement the interdiction literature by demonstrating that in cyber-physical systems, power grids, and consensus networks, dynamical vulnerability differs fundamentally from structural vulnerability: targeting low-indegree nodes rather than high-degree hubs identified as critical in classical interdiction models \cite{HolmeKimYoonHan2002} produces the most severe instabilities. This finding suggests that protection strategies derived from structural interdiction theory may be insufficient for dynamical networks, necessitating new defense mechanisms that account for both topological and dynamical characteristics.

We show the emergence of universal scaling properties, with small perturbations causing uniform responses and larger ones exposing specific vulnerabilities, especially at the low-indegree nodes.
When applied to real complex networks, we see that it may not always be easy to characterize the rate of instability produced by the attack in the general case of non-balanced digraphs. 
However, our theory (see also Fig.\,\ref{fig:realnet}) shows that the rate of instability generated by an attack on a single node averaged over all $N$ networks nodes, scales exactly as $-N^{-1}$,  for the case that the adversarial connection is unidirectional. This indicates that larger networks are more robust, on average, against intruder attacks.

Although our work focuses on the case that an intruder node is added to an existing network, it can be generalized to the case that an existing network node becomes an intruder. In that case, the strategy that maximizes the effects of an attack is for the intruder to target only one node with the lowest indegree, where the indegree is computed by counting all the nodes in the network, except for the intruder. 

An obvious countermeasure against intruder attacks is to disconnect the attacker(s) from the network. This is consistent, for example, with the load shedding strategy, commonly implemented for power grids in the case of a power imbalance. 
In general,   shedding the part of the network that includes the intruder provides a viable solution against intruder attacks, which is easier to do when the attacker is connected to the rest of the network by a single link compared to the case of multiple links. This observation highlights a trade-off between the severity of an attack and the recoverability from such attack.

{Extending the present framework to higher-order (many-body) interactions represents a natural and promising direction for future work. Recent studies have shown that such interactions can significantly modify synchronization properties, leading to global topological synchronization in some cases, but also to multistability and abrupt desynchronization in others \cite{carletti2023global, gallo2022synchronization, skardal2019abrupt, millan2020explosive}. This indicates that, depending on their structure, higher-order interactions may either promote coherence or give rise to additional instability mechanisms. In this context, we expect them to shift stability thresholds and modulate the system’s sensitivity to perturbations, without altering the underlying mechanism.}

Finally, a word of caution. While our results are rigorous for the case of linear dynamics (e.g., consensus), they are mostly based on extensive numerical calculations in the case of nonlinear dynamics.  We believe more work is needed to fully characterize the effects of intruder attacks in nonlinear networks, although our work provides strong indication that in many cases of interest, the rate of instability observed for linear networks is inherited by their nonlinear counterparts.

\section{Methods} \label{sec:methods}

\subsection{Transverse reactivity of stable and unstable consensus dynamics}

The first part of this subsection discusses general linear dynamical systems.
The second part of this subsection focuses on linear consensus dynamics as a special case of general linear dynamical systems. 
Consider a linear dynamical system
\begin{equation} \label{eq:lti}
    \dot{\bx} (t) = M \bx(t)
\end{equation}
where $\bx(t) \in \mathbb{R}^N$ is the vector of system states and the square matrix $M \in \mathbb{R}^{N \times N}$.
\begin{definition}{Spectral abscissa of a matrix.} Given a square matrix $M \in R^{N \times N}$, we define the spectral abscissa as the greatest real part of the matrix spectrum,
\begin{equation}
    \alpha(M)=\max_i \Re(\lambda_i),
\end{equation}
where $\lambda_1,\lambda_2,...,\lambda_N$ are the possibly complex eigenvalue of the matrix $M$ and $\Re(\cdot)$ returns the real part of its argument. 
\end{definition}
The origin is a stable fixed point for the system in Eq.\,\eqref{eq:lti} if and only if $\alpha(M) < 0$.

\begin{definition}{Reactivity of a linear dynamical system \cite{Neubert1997ALTERNATIVES}.} The reactivity for the system in Eq.\,\eqref{eq:lti} is defined as,
\begin{align} 
\label{eq:reactivity}
\begin{split}
    \xi\left( M \right) & := \max_{\|  \bX\| \neq 0} \left[ \frac{1}{\|  \bX \|} \frac{d \|  \bX\|}{d t} \right] \\
    & = \max_{  \bX \neq 0} \frac{{\bX}^\top \left( \frac{ M+M^\top}{2} \right) \bX }{{ \bX}^\top  \bX} \\
    & =  \lambda_{\max} \left( \frac{M+M^\top}{2} \right) =:  \lambda_{\max}(S),
\end{split}
\end{align}
where the matrix $S = (M+M^\top)/2$ is the symmetric part of the matrix $M$ and $\lambda_{\max}(\cdot)$ indicates the largest eigenvalue of the symmetric matrix in its argument.
\end{definition}
The reactivity measures the maximal rate of instantaneous growth/decay of the norm of the state vector. 
If $\xi\left( M \right) < 0$ ($\xi\left( M \right) >0$) the norm of the state vector will decay (may grow) instantaneously.
The choice of $\bX$ that maximizes \eqref{eq:reactivity} is the eigenvector $\bv_1$ of the matrix $S$ associated with the eigenvalue $\lambda_{\max}(S)$.
Note that the reactivity is computed by maximizing over $\bX$, thus it provides a worst-case instantaneous rate of growth for $\| \bX \|$. In the case that $\bX$ is generically chosen, it will have a non-zero component along the eigenvector $\bv_1$, thus the worst-case rate of growth for $\| \bX \|$ will be observed with probability 1.

\begin{theorem} \label{theo}
    The reactivity of the dynamics \eqref{eq:lti} is greater or equal to the spectral abscissa of the matrix $M$, i.e., $\xi(M) \geq \alpha(M)$.
\end{theorem}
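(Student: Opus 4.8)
The plan is to bound the spectral abscissa $\alpha(M)$ from above by the Rayleigh quotient of the symmetric part $S = (M+M^\top)/2$, evaluated at a suitably chosen test vector, and then recognize that this Rayleigh quotient is at most $\lambda_{\max}(S) = \xi(M)$. First I would take an eigenvalue $\lambda$ of $M$ with $\Re(\lambda) = \alpha(M)$ and a corresponding (possibly complex) eigenvector $\bv$ with $\|\bv\| = 1$, so that $M\bv = \lambda \bv$. Taking the Hermitian inner product with $\bv$ gives $\bv^* M \bv = \lambda$, and hence $\Re(\bv^* M \bv) = \alpha(M)$.

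Next I would observe that $\Re(\bv^* M \bv) = \tfrac12(\bv^* M \bv + \overline{\bv^* M \bv}) = \tfrac12(\bv^* M \bv + \bv^* M^\top \bv) = \bv^* S \bv$, using that $\overline{\bv^* M \bv} = \bv^\top \overline{M}\,\overline{\bv} = \bv^\top M^\top \overline{\bv} = (\overline{\bv})^\top M^\top (\overline{\bv})^{*\top}$... more cleanly: $\overline{\bv^* M \bv}$ is the conjugate of a scalar, equal to its own transpose, so $\overline{\bv^* M \bv} = \bv^\top M^\top \bar\bv = \bv^* M^\top \bv$ after noting $S = S^\top$ is real symmetric and the real and imaginary parts of $\bv$ can be handled separately. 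Concretely, writing $\bv = \bp + i\bq$ with $\bp, \bq \in \mathbb{R}^N$, one gets $\Re(\bv^* M \bv) = \bp^\top S \bp + \bq^\top S \bq$ while $\|\bv\|^2 = \|\bp\|^2 + \|\bq\|^2 = 1$. Then
\begin{equation}
    \alpha(M) = \bp^\top S \bp + \bq^\top S \bq \le \lambda_{\max}(S)\big(\|\bp\|^2 + \|\bq\|^2\big) = \lambda_{\max}(S) = \xi(M),
\end{equation}
where the inequality is the variational characterization of the largest eigenvalue of the symmetric matrix $S$ applied to $\bp$ and to $\bq$ separately.

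The only mild obstacle is bookkeeping with the complex eigenvector: one must be careful that $S$ is real symmetric so that the Rayleigh-quotient bound $\bu^\top S \bu \le \lambda_{\max}(S)\|\bu\|^2$ is legitimately applied to the real vectors $\bp$ and $\bq$, and that the cross terms cancel because $S^\top = S$. If $M$ happens to have a real eigenvector realizing $\alpha(M)$ the argument collapses to the one-line computation $\alpha(M) = \bv^\top S \bv \le \lambda_{\max}(S)$; the complex case is just the same idea with the decomposition into real and imaginary parts. No nontrivial analysis is needed beyond this.
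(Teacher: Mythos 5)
Your proposal is correct and rests on the same core idea as the paper's proof: evaluate the Rayleigh quotient of the symmetric part $S=(M+M^\top)/2$ at the eigenvector of $M$ realizing the spectral abscissa, and invoke the variational characterization $\xi(M)=\lambda_{\max}(S)$. The one substantive difference is that you handle the case of a complex dominant eigenvalue, whereas the paper's chain of equalities $\bv_1^\top M\bv_1=\alpha(M)\,\bv_1^\top\bv_1$ implicitly assumes the eigenvalue achieving $\alpha(M)$ is real with a real eigenvector; when the dominant eigenvalue is a complex-conjugate pair, that step as written does not go through. Your decomposition $\bv=\bp+i\bq$, giving $\alpha(M)=\Re(\bv^*M\bv)=\bp^\top S\bp+\bq^\top S\bq\le\lambda_{\max}(S)\bigl(\|\bp\|^2+\|\bq\|^2\bigr)=\xi(M)$, closes this gap cleanly, so your version is in fact the more complete argument; the two coincide exactly in the real-eigenvector case, as you note.
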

\begin{proof}
We set $S = (M+M^\top)/2$ and write the reactivity of the dynamics \eqref{eq:lti},
\begin{align} \label{eq:lambda}
\begin{split}
    \xi(M) = \max_{ \bZ \neq 0} \frac{{\bZ}^\top S \bZ }{{ \bZ}^\top \bZ} & \geq { \bv_1}^\top S \bv_1 \\
    & =   { \bv_1}^\top \left(\frac{M+M^\top}{2} \right) \bv_1 \\
    & =  \frac{1}{2}(\bv_1^\top M \bv_1 + \bv_1^\top M^\top \bv_1 ) \\
    & = \frac{1}{2}(\bv_1^\top \alpha(M) \bv_1  + \alpha(M) \bv_1^\top \bv_1 ) \\
    & = \alpha(M) 
\end{split}
\end{align}
This concludes the proof. 
\end{proof}

Theorem \ref{theo} is important as it states that the instantaneous rate of growth of the state $\bZ$ in \eqref{eq:lti} can be either larger than or equal to the asymptotic rate of growth of $\bZ$. This is true in both the cases of asymptotically stable and unstable dynamics. 
If $\alpha(M)>0$, we then see that $\xi(M) \geq \alpha(M) >0$. This indicates that for an unstable system, the rate of instantaneous growth is positive and exceeds or is equal to the rate of asymptotic growth.

It is inferred that the inequality in \eqref{eq:lambda} is satisfied with the equal sign when the left and right eigenvectors of the matrix $M$ associated with the eigenvalue $\lambda_1$ coincide. 
This means that $\bZ^* =  \bv_1$ becomes the maximizer in \eqref{eq:lambda}.

The discussion that follows is for the single-integrator consensus dynamics in Eq.\,\eqref{consensus}, $\dot{\bX}(t) = -L \bX(t)$, which is a special case of the general linear dynamical system in Eq.\,\eqref{eq:lti}.

We write the eigenvalue equation $L \bv_i=\lambda_i \bv_i$, where $\lambda_i$  ($\bv_i$) are the possibly complex eigenvalues (eigenvectors) of the Laplacian matrix $L$, $i=1,...,N$. By the property that the rows of the Laplacian all sum to zero, there exists one eigenvalue of the Laplacian equal to zero, with the corresponding eigenvector having entries that are all the same. We index this eigenvalue $j$.
 Without loss of generality, we set $\Re(\lambda_1(L)) \leq \Re(\lambda_2(L)) \leq ... \leq \Re(\lambda_N(L))$,
where $\Re(\cdot)$ returns the real part of its argument. When the Laplacian matrix is proper, $j=1$ and $\Re(\lambda_1(L))=0$, which results in an asymptotically stable consensus dynamics Eq.\,\eqref{consensus}. Here, we remove the assumption that the Laplacian is proper and present a general theory of transient and asymptotic stability for Eq.\,\eqref{consensus}.


Any point such that
\begin{equation} \label{cm}
x_1=x_2=...=x_N
\end{equation}
is an equilibrium for the dynamics \eqref{consensus}. The set of points that satisfy \eqref{cm} define the consensus manifold. We are interested in whether the consensus manifold is asymptotically stable or unstable and in characterizing the maximum rate of transient instability away from this manifold. Thus in what follows, we focus on the transverse consensus dynamics,
\begin{equation}
\label{transverse}
    \dot{\bZ}(t) = -A \bZ(t)
\end{equation}
$\bZ \in \mathbb{R}^{N-1}=V^\top \bX$, the matrix $A \in \mathbb{R}^{N-1 \times N-1} =V^\top L V$
and $V \in \mathbb{R}^{N \times N-1}$ is an orthonormal basis for the null subspace of $\bone^\top$, i.e., $V$ is a matrix whose columns are normal and orthogonal to one another and have zero column sums.
\begin{remark}
The matrix $A$ has the same spectrum as the Laplacian matrix $L$, except for one eigenvalue of the matrix $L$ which is equal zero, i.e., $\mathcal{S}(L) = \mathcal{S}(A) \cup \{ 0 \}$ where $\mathcal{S}(\cdot)$ returns the spectrum of the matrix in its argument.
\end{remark}

In our previous work \cite{nazerian2023single}, we studied the entire consensus dynamics Eq.\,\eqref{consensus} under the assumption the consensus dynamics was stable, i.e., $\alpha(-A) < 0$.
In this case, the trajectories of the system states eventually approach the origin (asymptotic stability), while the magnitude of the states may grow at transient times due to positive reactivity, which is not desirable in various situations, such as linearized dynamics about a fixed point. This transient growth may result in the invalidity of the linearization assumption.
Unlike previous work by us and others \cite{Farrell1996Generalized,Neubert1997ALTERNATIVES,Hennequin2012Nonnormal,Tang2014Reactivity,Biancalani2017Giant,Asllani2018Structure,MUOLO2019Patterns,Gudowska2020From,Lindmark2021Centrality,Duan2022Network,nazerian2023single,Nazerian2023Reactability},  here we are mostly concerned with the reactivity of the transverse dynamics, i.e., Eq.\,\eqref{transverse} in the case that $\alpha(-A) > 0$, i.e., the system Eq.\,\eqref{consensus} is asymptotically unstable.

Next, we comment on the relevance of both the spectral abscissa and the reactivity on the transverse consensus dynamics \eqref{transverse}. The asymptotic rate of $\bZ(t)$ in \eqref{transverse} is given by $\alpha(-A)$, i.e., the spectral abscissa of $-A$.
In general, $\alpha(-A)$ can be either positive or negative. 
On the other hand, the instantaneous rate of growth of the norm of the vector $\bZ$ is given by the reactivity,
\begin{align} \label{eq:reactivity0}
\begin{split}
    \xi\left( -A \right) & =  \max_{\|  \bZ\| \neq 0} \left[ \frac{1}{\|  \bZ \|} \frac{d \|  \bZ\|}{d t} \right] = \lambda_{\max} \left( -\frac{A+A^\top}{2} \right).
\end{split}
\end{align}
Thus a positive (negative) reactivity $\xi(-A)$ indicates that the norm of the state $\bZ$ tends to grow (shrink) in the limit of $t \rightarrow 0$. 

\begin{proposition}
    The algebraic connectivity of the graph represented by the Laplacian matrix in Eq.\,\eqref{consensus} is equal to the negative of the reactivity of the transverse dynamics in Eq.\,\eqref{transverse}, that is, $f(L) \equiv - \xi(-A)$.
\end{proposition}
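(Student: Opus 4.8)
The plan is to chase the definitions until both sides collapse to the smallest eigenvalue of one and the same symmetric matrix. First I would recall that, for the transverse dynamics \eqref{transverse}, the reactivity \eqref{eq:reactivity0} is $\xi(-A) = \lambda_{\max}\!\left(-\tfrac{A+A^\top}{2}\right)$. Using the elementary identity $\lambda_{\max}(-S) = -\lambda_{\min}(S)$ valid for any symmetric matrix $S$, this gives at once
\[
    -\xi(-A) \;=\; -\lambda_{\max}\!\left(-\tfrac{A+A^\top}{2}\right) \;=\; \lambda_{\min}\!\left(\tfrac{A+A^\top}{2}\right).
\]

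Next I would substitute $A = V^\top L V$ and compute the symmetric part, using linearity of the transpose:
\[
    \tfrac{A+A^\top}{2} \;=\; \tfrac{1}{2}\left(V^\top L V + V^\top L^\top V\right) \;=\; V^\top \, \tfrac{L+L^\top}{2}\, V .
\]
Hence $-\xi(-A) = \lambda_{\min}\!\left(V^\top \tfrac{L+L^\top}{2} V\right)$, which is precisely the closed-form expression for the algebraic connectivity $f(L)$ recorded in \eqref{eq:algeb}. That already closes the argument.

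For completeness I would also connect this with the variational form of $f$. Writing an arbitrary $\bX \perp \bone$ as $\bX = V\bZ$ with $\bZ \in \mathbb{R}^{N-1}$ — legitimate because the columns of $V$ form an orthonormal basis of the orthogonal complement of $\bone$ — one has $\bX^\top \bX = \bZ^\top V^\top V \bZ = \bZ^\top \bZ$ and $\bX^\top L \bX = \bZ^\top A \bZ = \bZ^\top \tfrac{A+A^\top}{2}\bZ$, so that
\[
    f(L) \;=\; \min_{\bX \perp \bone,\ \bX \neq \pmb 0}\frac{\bX^\top L \bX}{\bX^\top \bX}
    \;=\; \min_{\bZ \neq \pmb 0}\frac{\bZ^\top \tfrac{A+A^\top}{2}\bZ}{\bZ^\top \bZ}
    \;=\; \lambda_{\min}\!\left(\tfrac{A+A^\top}{2}\right) \;=\; -\xi(-A),
\]
the last step before the conclusion being the Courant--Fischer characterization of the smallest eigenvalue of a symmetric matrix.

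There is essentially no genuine obstacle here: the statement is a bookkeeping identity between two names for $\lambda_{\min}\!\left(V^\top \tfrac{L+L^\top}{2} V\right)$, one coming from the digraph algebraic-connectivity definition \eqref{eq:algeb} and the other from the reactivity \eqref{eq:reactivity0} of the reduced matrix $A$. The only point deserving a word of care is the change of variables $\bX = V\bZ$: one must invoke $V^\top V = I_{N-1}$ (orthonormality of the columns of $V$) so that the two Rayleigh-quotient denominators coincide, and note that as $\bZ$ ranges over $\mathbb{R}^{N-1}$ the vector $\bX = V\bZ$ ranges over all of $\{\bX : \bX \perp \bone\}$, so the two minima are over matching sets.
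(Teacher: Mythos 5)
Your proposal is correct and follows essentially the same route as the paper's proof: both reduce the identity to the observation that $\tfrac{A+A^\top}{2}=V^\top\tfrac{L+L^\top}{2}V$ (equivalently, the change of variables $\bX=V\bZ$ with $V^\top V=I$ matching the two Rayleigh quotients over $\bone^\perp$), the only cosmetic difference being that the paper re-derives the Rayleigh-quotient form of $\xi(-A)$ from the growth-rate definition while you start directly from the eigenvalue expression already recorded in Eq.~\eqref{eq:reactivity0}. No gaps.
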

\begin{proof}
    The reactivity of the dynamics in Eq.\,\eqref{eq:reactivity0} is
    \begin{align*}
        \xi\left( -A \right) & :=  \max_{\|  \bZ\| \neq 0} \left[ \frac{1}{\|  \bZ \|} \frac{d \|  \bZ\|}{d t} \right] \\
        & = \max_{\|  \bZ\| \neq 0} \left[ \frac{1}{\sqrt{\bZ^\top \bZ}} \frac{d \sqrt{\bZ^\top \bZ}}{d t} \right] \\
        & = \max_{\|  \bZ\| \neq 0} \left[ \frac{1}{\sqrt{\bZ^\top \bZ}} \frac{\dot{\bZ}^\top \bZ + \bZ^\top \dot{\bZ}}{2\sqrt{\bZ^\top \bZ}} \right] \\
        & = \max_{\|  \bZ\| \neq 0} \left[ \frac{\bZ^\top (-A - A^\top) \bZ}{2 \bZ^\top \bZ} \right].
    \end{align*}
    Also, $V^\top V = I$, $L = V A V^\top$ and $\bX = V \bZ$ which indicates $\bX \perp \bone$ since $\bX \in \text{range}(V)$, so
    {\footnotesize
    \begin{align*}
        \xi\left( -A \right) & = \max_{\|  \bZ\| \neq 0} \left[ \frac{\bZ^\top  \left(-(V^\top V) A (V^\top V) - (V^\top V) A^\top (V^\top V)\right) \bZ}{2 \bZ^\top \bZ} \right] \\
        & = \max_{\|  \bZ\| \neq 0} \left[ \frac{(\bZ^\top V^\top)  (- L - L^\top) (V\bZ)}{2 (\bZ^\top V^\top) (V\bZ)} \right] \\
        & = \max_{ \bX \neq 0, \bX \perp \bone} \left[ \frac{\bX^\top  (- L - L^\top) \bX}{2 \bX^\top \bX} \right] \\
        & = -\min_{ \bX \neq 0, \bX \perp \bone} \left[ \frac{\bX^\top  ( L + L^\top) \bX}{2 \bX^\top \bX} \right] \\
        & = -\min_{ \bX \neq 0, \bX \perp \bone} \left[ \frac{\bX^\top  L \bX}{\bX^\top \bX} \right] \\
        & = -f (L).
    \end{align*}}
    That concludes the proof.
\end{proof}

In what follows, we refer to $\xi(-A)$ in Eq.\ \eqref{eq:reactivity0} as either the reactivity of the dynamics in Eq.\,\eqref{transverse} or the transverse reactivity of the consensus dynamics in Eq.\,\eqref{consensus}, interchangeably.

\begin{corollary}
    The instantaneous rate of growth of the transverse consensus dynamics in Eq.\,\eqref{transverse} is greater or equal to the asymptotic rate of growth, i.e., $\xi (-A) \geq \alpha (-A)$.
\end{corollary}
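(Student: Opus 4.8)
The plan is to recognize that the transverse consensus dynamics $\dot{\bZ}(t) = -A\bZ(t)$ in Eq.~\eqref{transverse} is a special instance of the general linear dynamical system $\dot{\bx}(t) = M\bx(t)$ of Eq.~\eqref{eq:lti}, obtained by setting $M = -A$. Since Theorem~\ref{theo} establishes $\xi(M) \geq \alpha(M)$ for \emph{every} square matrix $M$, with no assumption that $M$ be a Laplacian or have any particular sign structure, substituting $M = -A$ immediately yields $\xi(-A) \geq \alpha(-A)$, which is exactly the claim.

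In slightly more detail, I would first note that $A = V^\top L V$ is a well-defined real $(N-1)\times(N-1)$ matrix, so $-A$ is a legitimate choice for $M$ in Theorem~\ref{theo}. Then I would recall the identifications established just above the corollary: $\xi(-A)$ is the worst-case instantaneous (transient) rate of growth of $\|\bZ(t)\|$ by Eq.~\eqref{eq:reactivity0}, and $\alpha(-A)$ is the spectral abscissa of $-A$, which governs the asymptotic rate of growth of $\bZ(t)$. With these identifications, the chain of inequalities in Eq.~\eqref{eq:lambda}, applied to $S = (-A - A^\top)/2$ and the eigenvector $\bv_1$ of $S$ associated with $\lambda_{\max}(S)$, gives $\xi(-A) = \lambda_{\max}(S) \geq \alpha(-A)$, completing the proof.

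There is essentially no obstacle here: the corollary is a direct specialization of Theorem~\ref{theo}, and the only point worth making explicit is that the transverse reduction by the orthonormal basis $V$ does not alter the relevant spectrum, since by the Remark following Eq.~\eqref{transverse} one has $\mathcal{S}(A) = \mathcal{S}(L)\setminus\{0\}$. Consequently the conclusion can equivalently be read in terms of $L$: the transverse reactivity of the consensus dynamics dominates the nontrivial spectral abscissa, so that whenever the attacked Laplacian $L_{aug}$ is improper (i.e.\ $\alpha(-A) > 0$), the transient rate of instability is strictly positive and at least as large as the asymptotic one, consistent with the discussion motivating our focus on $f(L_{aug})$ rather than $\alpha(L_{aug})$.
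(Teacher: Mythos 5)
Your proof is correct and matches the paper's own argument, which likewise obtains the corollary by substituting $M=-A$ into Theorem~\ref{theo}. The additional remarks you make about the spectral identification $\mathcal{S}(A)=\mathcal{S}(L)\setminus\{0\}$ are accurate but not needed for the inequality itself.
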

\begin{proof}
    The relation $\xi (-A) \geq \alpha (-A)$ is obtained by placing $M = -A$ in Theorem \ref{theo}.
\end{proof}

To illustrate the relation $\xi (-A) \geq \alpha (-A)$, we provide a numerical example of stable and unstable dynamics.
First, we randomly generate the proper Laplacian matrix
\begin{equation} \label{Lproper}
    L = \begin{bmatrix}
    1 & -1 & 0 & 0 & 0 \\ 
    0 & 1 & 0 & 0 & -1 \\ 
    -1 & -1 & 2 & 0 & 0 \\ 
    0 & 0 & 0 & 0 & 0 \\ 
    0 & 0 & 0 & -1 & 1 
    \end{bmatrix}
\end{equation}
with all eigenvalues with non-negative real-part and only one zero eigenvalue.
The matrix $A$ in Eq.\,\eqref{transverse} corresponding to the Laplacian matrix $L$ in \eqref{Lproper} is
\begin{equation}
    A = \begin{bmatrix}
    1.5427 & -0.1809 & 0.2337 & -0.9045 \\ 
    -0.1483 & 2.1281 & 0.5427 & 0.4045 \\ 
    0.5427 & -0.1809 & 0.2337 & 0.0955 \\ 
    0.5427 & -0.1809 & -0.7663 & 1.0955 
    \end{bmatrix}.
\end{equation}
Here, $\xi(-A) = -0.0250 \geq \alpha(-A) = -1$. 
We simulate Eq.\,\,\eqref{transverse} from the initial condition $\bZ(0) = [-0.2187,\ -0.1115,\ 0.9320 ,\ 0.2667 ]$ and plot the norm of the transverse perturbation $\| \bZ(t) \|$ in Fig.\,\ref{fig:consensus_rate} A the in log scale.
As evident by the slope of $\| \bZ(t) \|$ vs $t$, the transient rate of decay $\xi(-A)$ is larger than the asymptotic rate of decay $\alpha(-A)$.

Next, we add an adversarial agent to the above Laplacian matrix with a budget $-c = -0.001$ and obtain
\begin{equation} \label{Lnonproper}
    L_{aug} = \begin{bmatrix}
    1 & -1 & 0 & 0 & 0 & 0 \\ 
    0 & 1 & 0 & 0 & -1 & 0 \\ 
    -1 & -1 & 2 & 0 & 0 & 0 \\ 
    0 & 0 & 0 & -0.001 & 0 & 0.001 \\ 
    0 & 0 & 0 & -1 & 1 & 0 \\ 
    0 & 0 & 0 & 0.001 & 0 & -0.001 
    \end{bmatrix}
\end{equation}
which has mixed signed real-part eigenvalues.
For the Laplacian matrix \eqref{Lnonproper}, the corresponding matrix $A_{aug}$ in Eq.\,\eqref{transverse} is
\begin{equation}
    A_{aug} = \begin{bmatrix}
    1.4923 & -0.1527 & 0.2024 & -0.9160 & 0.0840 \\ 
    -0.2178 & 2.1372 & 0.4923 & 0.3739 & 0.3739 \\ 
    0.4923 & -0.1527 & 0.2014 & 0.0840 & 0.0850 \\ 
    0.4923 & -0.1527 & -0.7976 & 1.0840 & 0.0840 \\ 
    0.4923 & -0.1527 & 0.2034 & 0.0840 & 0.0830 
    \end{bmatrix}.
\end{equation}
Here, $\xi(-A_{aug}) = 0.0874 \geq \alpha (-A_{aug}) = 0.002$.
We simulate Eq.\,\,\eqref{transverse} from the initial condition $\bZ(0) = [0.1202,\  0.1937,\ 0.1545,\ 0.8538,\ 0.4418]$ and plot the norm of the transverse perturbation $\| \bZ(t) \|$ in Fig.\,\ref{fig:consensus_rate} B in the log scale.
As evident by the slope of $\| \bZ(t) \|$ vs $t$, the transient rate of decay $\xi(-A)$ is larger than the asymptotic rate of decay $\alpha(-A)$.

\begin{figure}
    \centering
    \includegraphics[width=0.6\linewidth]{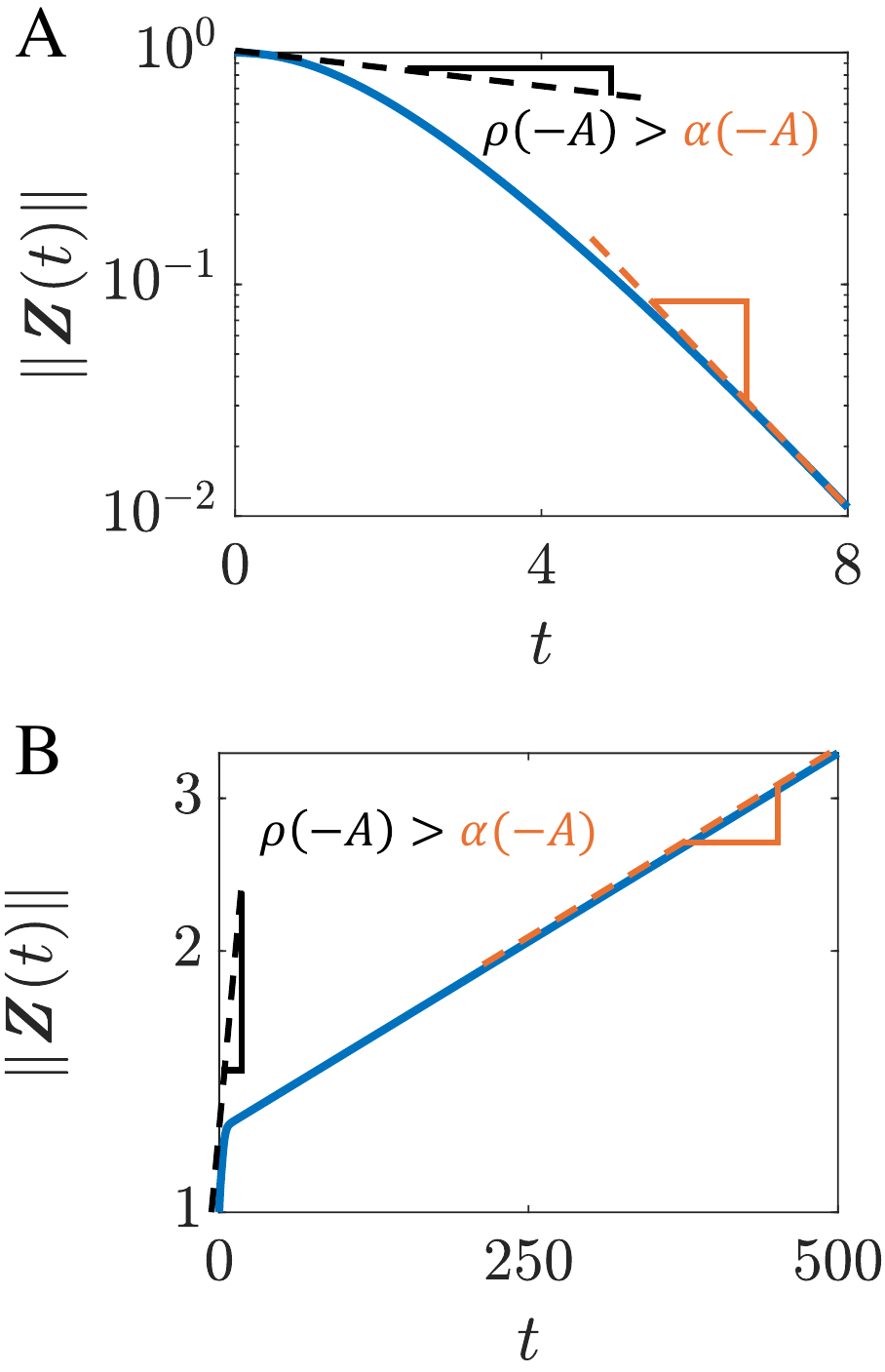}
    \caption{\textbf{Transient growth and decay in consensus dynamics.}
Panels A and B show the transverse consensus dynamics governed by Eq.~\eqref{transverse} for the cases of stable and unstable dynamics, respectively. In both panels, the vertical axis is shown on a logarithmic scale, so that the slopes of the curves represent the rates of decay (panel A) and growth (panel B). In both cases, the transient rate, $\xi(-A)$, exceeds the corresponding asymptotic rate, $\alpha(-A)$, demonstrating that the initial decay or growth is faster than its long-term asymptotic behavior.}
    \label{fig:consensus_rate}
\end{figure}

\subsection{Proof of Proposition \ref{prop1}} \label{sec:prop1}

Given the possibly asymmetric Laplacian $L$ and the budget $-c < 0$, we solve
        \begin{align}
        \begin{split}
            \min_{\bb, \bx } \quad 
            & \bx^\top \left(V^\top \dfrac{{L_{aug}^b}+{L_{aug}^b}^\top}{2} V \right)\bx \\
            \text{subject to} \quad 
            & \bx^\top \bx = 1 , \\
            & L_{aug}^b = \begin{bmatrix}
                L + \text{diag}(\bb) & -\bb \\
                -\bb^\top & -c
            \end{bmatrix}, \\
            & b_i \leq 0, \quad \forall i = 1, \hdots, N,\\
            & \sum_{i=1}^N b_i = -c.
        \end{split}
        \end{align}
    First, we set $\by:= V\bx$ which also implies $\bone^\top \by = 0$ and rewrite the optimization in $\by$ and $\bb$:
        \begin{align}
        \begin{split}
            \min_{\bb, \by } \quad 
            & \by^\top \dfrac{{L_{aug}^b}+{L_{aug}^b}^\top}{2} \by \\
            \text{subject to} \quad 
            & \by^\top \by = 1 , \\
            & \bone^\top \by = 0, \\
            & {L_{aug}^b} = \begin{bmatrix}
                L + \text{diag}(\bb) & -\bb \\
                -\bb^\top & -c
            \end{bmatrix}, \\
            & b_i \leq 0, \quad \forall i = 1, \hdots, N,\\
            & \sum_{i=1}^N b_i = -c.
        \end{split}
        \end{align}
In the above optimization, we set $\by = : [\by_r^\top, y_{N+1}]^\top$ and $\by_r^\top = [y_1, \ y_2, \hdots, y_N]^\top \in \mathbb{R}^N$, i.e., we separate the last entry of the vector $\by$ from the remaining entries.
We also assume $y_i \neq y_j, \forall i\neq j, i= 2, \hdots, N, j = 2, \hdots, N.$
The objective function is
\begin{align}
\begin{split}
    \by^\top \dfrac{{L_{aug}^b}+{L_{aug}^b}^\top}{2} \by = & -y_{N+1}^2 c+ \by_r^\top \frac{L+L^\top}{2} \by_r \\ 
    &  -2y_{N+1} \by_r^\top \bb +\by_r^\top \diag(\bb) \by_r.
\end{split}
\end{align}
The two terms $-2y_{N+1} \by_r^\top \bb +\by_r^\top \diag(\bb) \by_r$ in the above equation that explicitly depend on $\bb$ are written as
\begin{align}
\begin{split}
    -2y_{N+1} \by_r^\top \bb + \by_r^\top \diag(\bb) \by_r & = \sum_{i=1}^N -2 y_{N+1} y_i b_i  + \sum_{i=1}^N y_i^2 b_i \\
    & = \sum_{i=1}^N \left( y_i^2 -2 y_{N+1} y_i  \right) b_i \\
    & =: \sum_{i=1}^N \beta_i b_i
\end{split}
\end{align}
where $\beta_i =  y_i^2 -2 y_{N+1} y_i, \ i = 1, 2, \hdots, N$.
Since these two terms in the objective function depend linearly on $\bb$ and $\sum_j b_j = -c$ and $b_j \leq 0, \forall j$, thus one focuses all budget on the node with the maximum $\beta_i$, i.e., $b_{i^*} = -c$, $b_j = 0, \ \forall j \neq i^*$, and $i^* = \arg \max_i \beta_i$. 
That concludes the proof. \hfill \qed

{We remark that Proposition 1 simply predicts under generic assumptions the particular structure of $b^*$, the optimal $b$, namely that all the entries of $b^*$ will be zero except for one, but it does not predict which one.}

\subsection{Proof of Proposition \ref{prop2}} \label{sec:prop2}

We evaluate the algebraic connectivity using \eqref{eq:algeb} and take the vector $\bX_0 = [N, \ -1, \allowbreak \ -1, \allowbreak \cdots, \allowbreak  -1]^\top$ such that its entries sum to 0.
Evaluating the Rayleigh quotient for $L_{aug}$ from \eqref{eq:Lnewbi} and $\bX_0$ yields
\begin{equation*}
    f \leq \dfrac{\bX_0^\top L_{aug} \bX_0}{\bX_0^\top \bX_0} = -\dfrac{(N+1)^2 c}{N^2 + N} \leq 0.
\end{equation*}
Thus $f \leq 0$. \hfill \qed

\subsection{Proof of Proposition \ref{prop3}} \label{sec:prop3}

Given the possibly asymmetric Laplacian $L$ and the budget $-c < 0$, we solve
        \begin{align}
        \begin{split}
            \min_{\bb, \bx } \quad 
            & \bx^\top \left(V^\top \dfrac{{L_{aug}^u}+{L_{aug}^u}^\top}{2} V \right)\bx \\
            \text{subject to} \quad 
            & \bx^\top \bx = 1 , \\
            & L_{aug}^u = \begin{bmatrix}
                L + \text{diag}(\bb) & -\bb \\
                \pmb{0}^\top & 0
            \end{bmatrix}, \\
            & b_i \leq 0, \quad \forall i = 1, \hdots, N,\\
            & \sum_{i=1}^N b_i = -c.
        \end{split}
        \end{align}
    First, we set $\by:= V\bx$ which also implies $\bone^\top \by = 0$ and rewrite the optimization in $\by$ and $\bb$:
        \begin{align}
        \begin{split}
            \min_{\bb, \by } \quad 
            & \by^\top \dfrac{{L_{aug}^u}+{L_{aug}^u}^\top}{2} \by \\
            \text{subject to} \quad 
            & \by^\top \by = 1 , \\
            & \bone^\top \by = 0, \\
            & L_{aug}^u = \begin{bmatrix}
                L + \text{diag}(\bb) & -\bb \\
                \pmb{0}^\top & 0
            \end{bmatrix}, \\
            & b_i \leq 0, \quad \forall i = 1, \hdots, N,\\
            & \sum_{i=1}^N b_i = -c.
        \end{split}
        \end{align}
In the above optimization, we set $\by = : [\by_r^\top, y_{N+1}]^\top$ and $\by_r^\top = [y_1, \ y_2, \hdots, y_N]^\top \in \mathbb{R}^N$, i.e., we separate the last entry of the vector $\by$ from the remaining entries.
We also assume $y_i \neq y_j, \forall i\neq j, i= 2, \hdots, N, j = 2, \hdots, N.$
The objective function is
\begin{align}
\begin{split}
    \by^\top \dfrac{{L_{aug}^u}+{L_{aug}^u}^\top}{2} \by = & -y_{N+1} \by_r^\top \bb + \by_r^\top \frac{L+L^\top}{2} \by_r \\ 
    & +\by_r^\top \diag(\bb) \by_r
\end{split}
\end{align}
The two terms $-y_{N+1} \by_r^\top \bb +\by_r^\top \diag(\bb) \by_r$ in the above equation that explicitly depend on $\bb$ are written as
\begin{align}
\begin{split}
    -y_{N+1} \by_r^\top \bb + \by_r^\top \diag(\bb) \by_r & = \sum_{i=1}^N - y_{N+1} y_i b_i  + \sum_{i=1}^N y_i^2 b_i \\
    & = \sum_{i=1}^N \left( y_i^2 - y_{N+1} y_i  \right) b_i \\
    & =: \sum_{i=1}^N \beta_i b_i
\end{split}
\end{align}
where $\beta_i =  y_i^2 - y_{N+1} y_i, \ i = 1, 2, \hdots, N$.
Since these two terms in the objective function depend linearly on $\bb$ and $\sum_j b_j = -c$ and $b_j \leq 0, \forall j$, thus one focuses all budget on the node with the maximum $\beta_i$, i.e., $b_{i^*} = - c$, $b_j = 0, \ \forall j \neq i^*$, and $i^* = \arg \max_i \beta_i$. 
That concludes the proof. \hfill \qed

\subsection{Proof of Proposition \ref{prop4}} \label{sec:prop4}

    We evaluate the algebraic connectivity using \eqref{eq:algeb} and take the vector $\bX_0 = [N, \ -1, \allowbreak \ -1, \allowbreak \cdots, \allowbreak  -1]^\top$ such that its entries sum to 0.
    Evaluating the Rayleigh quotient for $L_{aug}$ from \eqref{eq:Lnewuni} and $\bx_0$ yields
    \begin{equation*}
        f \leq \dfrac{\bX_0^\top L_{aug} \bX_0}{\bX_0^\top \bX_0} = -\dfrac{(N+1) c}{N^2 + N} \leq 0.
    \end{equation*}
    Thus $f \leq 0$, and the proof is complete. \hfill \qed

\subsection{Digraphs with a Unidirectional Connection} \label{sec:uni}

Here we use matrix perturbation theory \cite{bamieh2022tutorial} to predict the changes in the eigenvalues of a given matrix $A_0$ subject to a small perturbation $\epsilon A_1$, i.e., $\Lambda_\epsilon = \Lambda_0 + \epsilon \Lambda_1 + \epsilon^2 \Lambda_2 + \cdots$ where $\Lambda_\epsilon$ is the estimated eigenvalues of the matrix $A_\epsilon = A_0 + \epsilon A_1$, $\Lambda_0$ is the exact eigenvalues of $A_0$, and $\epsilon^k \Lambda_k, \ \forall k \geq 1$, are the $k$th order approximation of changes in the eigenvalues $\Lambda_0$.

In the following, we focus on the case of directed networks, i.e., for which the matrix $A_0$ is asymmetric and set $k = 1$.
Thus, $\Lambda_1 = \diag(W_0^* A_1 V_0)$, where $W_0$ and $V_0$ are the matrices of the left and the right eigenvectors of $A_0$ such that $W_0^* V_0 = I$, and the superscript ${}^*$ denotes conjugate transpose.

\textbf{The limit of small $c$ and unidirectional connection:} In this case, we set $\epsilon = c$ and
\begin{equation}
    A_0 =  V^\top \dfrac{\tilde{L} + \tilde{L}^\top}{2} V, \quad A_1 = V^\top \dfrac{P_i + P_i ^\top}{2} V
\end{equation}
 where 
\begin{align} \label{eq:lagematrix2}
    \tilde{L} = 
    \begin{bmatrix}
    & & & & & & & 0\\ 
    & & & & & & & \vdots \\ 
    & & & & & & & 0 \\ 
    & & & L & & & & 0 \\ 
    & & & & & & & 0 \\ 
    & & & & & & & \vdots \\ 
    & & & & & & & 0 \\ 
    0 & \hdots & 0 & 0 & 0 & \hdots & 0 & 0
    \end{bmatrix}, \\
    P_i = 
    \begin{bmatrix}
    0 & \hdots & 0 & 0 & 0 & \hdots & 0 & 0 \\ 
    \vdots & \ddots & \vdots & \vdots & \vdots & \ddots & \vdots & \vdots \\ 
    0 & \hdots & 0 & 0 & 0 & \hdots & 0 & 0 \\ 
    0 & \hdots & 0 & -1 & 0 & \hdots & 0 & 1 \\ 
    0 & \hdots & 0 & 0 & 0 & \hdots & 0 & 0 \\ 
    \vdots & \ddots & \vdots & \vdots & \vdots & \ddots & \vdots & \vdots \\ 
    0 & \hdots & 0 & 0 & 0 & \hdots & 0 & 0 \\ 
    0 & \hdots & 0 & 0 & 0 & \hdots & 0 & 0 
    \end{bmatrix}
\end{align}
where $P_i$ has the entry -1 on the main diagonal on the $i$th position.
The first order matrix perturbation theory is then
\begin{align}
\begin{split}
    f_b & = f_0 + c \dfrac{df}{db} \\
    & = \bv_0^\top A_0 \bv_0 + c \bv_0^\top A_1 \bv_0 \\
    & = \bv_0^\top V^\top \dfrac{\tilde{L} + \tilde{L}^\top}{2} V \bv_0 + c \bv_0^\top V^\top \dfrac{P_i + P_i ^\top}{2} V \bv_0 \\
    & =: \by_0^\top \dfrac{\tilde{L} + \tilde{L}^\top}{2} \by_0 + c \by_0^\top \dfrac{P_i + P_i ^\top}{2} \by_0 \\
    & =: f_0 + c \by_0^\top \dfrac{P_i + P_i ^\top}{2} \by_0
\end{split}
\end{align}
where $\by_0 := V \bv_0$ and $\by_0^\top\left(\frac{\tilde{L} + \tilde{L}^\top}{2} \right) \by_0 = f_0 $. 
Note that since the graph is balanced, the Laplacian matrix $L$ has both row and column sums equal to zero. 
Also, considering the block diagonal structure of the matrix $\tilde{L}$ results in having two zero eigenvalues, one corresponding to the eigenvector with all entries equal ones and the other eigenvalue $f_0 = 0$ corresponding to the eigenvector $\by_0 = \frac{1}{\sqrt{N(N+1)}}[1, \ 1,\ \hdots 1, \ -N] \in \mathbb{R}^{N+1}$.

We are interested in the change in the zero eigenvalue corresponding to the eigenvector $\by_0$.
It thus follows
\begin{align}
\begin{split} \label{eq:smallb2}
    f_b & = f_0 + c \by_0^\top \dfrac{P_i + P_i ^\top}{2} \by_0 = 0 + c (-0.1) = -0.1 c.
\end{split}
\end{align}

\textbf{The limit of large $c$ and unidirectional connection:}
Here, we set $\epsilon = 1$ and
\begin{equation}
    A_0 = c V^\top \frac{P_i+P_i^\top}{2}V = c V^\top P_i V, \quad A_1 = V^\top \dfrac{\tilde{L} + \tilde{L}^\top}{2} V
\end{equation}
where the matrices $P_i$ and $\tilde{L}$ are defined in Eq.\, \eqref{eq:lagematrix2}.
Now, the eigenvalue of interest is the most negative nonzero eigenvalue of $\frac{P_i+P_i^\top}{2}$ which is $f_0 =  -\frac{\sqrt{2}+1}{2}$.
The corresponding eigenvector is $\by_0^i = [0 \ \cdots \ 0 \ \ {\sqrt{2}+1} \ \ 0 \ \cdots \ 0 \ \ -1]^\top$ where the $i$th and the last entry of this eigenvectors have nonzero elements.
We set $\by_0^i = V\bv_0^i$ as the effective vector for the first-order approximation of matrix perturbation theory.
The first order change in the eigenvalue $f_0$ is ${\by_0^i}^\top A_1 \by_0^i = {\by_0^i}^\top \frac{\tilde{L} + \tilde{L}^\top}{2} \by_0^i = (3+2\sqrt{2})  L_{ii} > 0$.
Therefore, the perturbed eigenvalue after pinning node $i$ is
\begin{equation}
    f_b = f_0 + \epsilon {\by_0^i}^\top A_1 \by_0^i = -\frac{\sqrt{2}+1}{2}c + (3+2\sqrt{2})  L_{ii}.
\end{equation}

\subsection{Digraphs with a Bidirectional Connection} \label{sec:bi}

\textbf{The limit of small $c$ and bidirectional connection:} all the derivations are similar to the case of unidirectional connection except for the matrix
\begin{equation} \label{eq:pbi}
    P_i = 
    \begin{bmatrix}
    0 & \hdots & 0 & 0 & 0 & \hdots & 0 & 0 \\ 
    \vdots & \ddots & \vdots & \vdots & \vdots & \ddots & \vdots & \vdots \\ 
    0 & \hdots & 0 & 0 & 0 & \hdots & 0 & 0 \\ 
    0 & \hdots & 0 & -1 & 0 & \hdots & 0 & 1 \\ 
    0 & \hdots & 0 & 0 & 0 & \hdots & 0 & 0 \\ 
    \vdots & \ddots & \vdots & \vdots & \vdots & \ddots & \vdots & \vdots \\ 
    0 & \hdots & 0 & 0 & 0 & \hdots & 0 & 0 \\ 
    0 & \hdots & 0 & 1 & 0 & \hdots & 0 & -1
    \end{bmatrix}.
\end{equation}
This results in
\begin{align}
\begin{split} 
    f_b & = f_0 + c \by_0^\top \dfrac{P_i + P_i ^\top}{2} \by_0 = 0 + c (-1.1) = -1.1 c.
\end{split}
\end{align}

\textbf{The limit of large $c$ and bidirectional connection:}
Here, we set $\epsilon = 1$ and
\begin{equation}
    A_0 = c V^\top \frac{P_i+P_i^\top}{2}V = c V^\top P_i V, \quad A_1 = V^\top \dfrac{\tilde{L} + \tilde{L}^\top}{2} V
\end{equation}
where the matrices $P_i$ and $\tilde{L}$ are defined in Eqs.\,\eqref{eq:pbi} and \eqref{eq:lagematrix2}.
Now, the eigenvalue of interest is the one and only nonzero eigenvalue of $V^\top P_i V$ which is $f_0 = -2$.
Since $V$ is a similarity transformation for $N$ number of the eigenvalues, the matrix $P_i$ has the same eigenvalue $f_0 = -2$ and the corresponding eigenvector is $\by_0^i = [0 \ \cdots \ 0 \ \ {-1/\sqrt{2}} \ \ 0 \ \cdots \ 0 \ \ 1/\sqrt{2} ]^\top$ where the $i$th and the last entry of this eigenvectors have nonzero elements.
We define $\by_0^i := V\bv_0^i$ as the effective vector for the first-order approximation of matrix perturbation theory.
The first order change in the eigenvalue $f_0$ is ${\by_0^i}^\top A_1 \by_0^i = {\by_0^i}^\top \frac{\tilde{L} + \tilde{L}^\top}{2} \by_0^i = L_{ii} /2 > 0$.
Therefore, the perturbed eigenvalue after pinning node $i$ is
\begin{equation}
    f_b = f_0 + \epsilon {\by_0^i}^\top A_1 \by_0^i = -2c + \dfrac{L_{ii}}{2}.
\end{equation}

\subsection{Proof of Mean-Slope Relation} \label{sec:meanslope}

We show this relation $< df / d c > = -1/N$ in what follows.
{
We emphasize that we proceed without introducing the assumption that the network is  balanced.
}
Based on first order perturbation theory, $df_i / dc = \by_0^\top \frac{P_i + P_i^\top}{2} \by_0$ where $\by_0 = V \bv_0$ where $\bv_0$ is eigenvector corresponding to the smallest eigenvalue of the matrix $V^\top \frac{\tilde{L} + \tilde{L}^\top}{2}V$ and the matrices $P_i$ and $\tilde{L}$ are defined in Eq.\,\eqref{eq:lagematrix2}.
Note that $\by_0$ is in the range of the matrix $V$, resulting in the zero-sum of the entries of the vector $\by_0$.
It follows
\begin{align}
\begin{split}
    \left< \frac{d f}{d c} \right> = \dfrac{1}{N} \sum_{i=1}^N \frac{d f_i}{d c} & = \dfrac{1}{N} \by_0^\top \left( \sum_{i=1}^N  \dfrac{P_i + P_i^\top}{2}\right) \by_0 \\
    & =: \dfrac{1}{N} \by_0^\top \left( \bar{P}\right) \by_0,
\end{split}
\end{align}
where the matrix $\bar{P}$ has the following form
\begin{equation}
    \bar{P} = \sum_{i=1}^N  \dfrac{P_i + P_i^\top}{2} = \begin{bmatrix}
-1 &  &  &  & \frac{1}{2} \\ 
 & -1 & &  & \frac{1}{2} \\ 
 &  & \ddots &  & \vdots \\ 
 &  &  & -1 & \frac{1}{2} \\ 
\frac{1}{2} & \frac{1}{2} & \hdots & \frac{1}{2} & 0 
\end{bmatrix}.
\end{equation}
It follows that the matrix $\bar{P}$ has $N-1$ eigenvalues which are equal to $-1$ and the corresponding eigenvectors have their sum of the entries equal to zero. 
This suggests that these eigenvectors form a basis for the vector $\by_0$ and thus, the product $\by_0^\top \bar{P} \by_0 = -1$. 
Therefore, $< df / d c > = -1/N$. 
\hfill \qed

\subsection{Asymptotic Stability Analysis for the Case of Synchronization Dynamics \label{Sec:MAI}}

Next we present standard derivations from \cite{fujisaka1983stability,pecora1998master}.
A particular solution for the synchronization dynamics  \eqref{eq:synch} is the synchronous solution $\bx_1(t)=\bx_2(t)=...=\bx_N(t)=\bs(t)$, where $\bs(t)$, evolves in time based on the equation:
\begin{equation} \label{eq:s}
    \dot{\bs} (t) = \bF(\bs(t)).
\end{equation}
The synchronization manifold is the sets of phase space points that satisfy  $\bx_1 = \bx_2 = \hdots \bx_N$.

We study the time evolution of infinitesimal perturbations about the synchronization manifold, $\delta \bx_i (t) \coloneq \bx_i (t) - \bs(t)$. 
The synchronization is stable if $\delta \bx_i (t) \to 0$ as $t \to \infty$, $\forall i$, which implies $\bx_i (t) \to \bs(t)$ as $t \to \infty$, $\forall i$. 

To study the asymptotic stability of the perturbations $\delta \bx_i (t)$, we linearize Eq.\,\eqref{eq:synch} about $\bs(t)$ and derive the following linearized sets of dynamical equations:
\begin{equation} \label{eq:dx}
    \delta \dot{\bx}_i (t) = D\bF(\bs(t)) \delta \bx_i(t) - \sigma \sum_{j = 1}^{N} L_{ij} D\bH (\bs(t)) \delta \bx_j (t),
\end{equation}
$i = 1, \hdots, N$, where $D\bF (\bs(t))$ and $D \bH(\bs(t))$ are the Jacobians of $\bF$ and $\bH$ evaluated at $\bs(t)$, respectively.
A compact form of the above equations can be obtained by first concatenating the perturbation vectors as $\delta \bX(t) \coloneq \left[\delta \bx_1 (t)^\top, \hdots, \delta \bx_N (t)^\top \right]^\top \in \mathbb{R}^{mN}$, and then writing:
\begin{equation} \label{eq:compact}
    \delta \dot{\bX} (t) = \Bigg(I_N \otimes D \bF(s) - \sigma L \otimes D\bH(s)\Bigg) \delta \bX(t),  
\end{equation}
where $\otimes$ denotes the Kronecker product. We call $V$ the matrix having for columns the eigenvectors of $L$ and $\Lambda$ the associated diagonal matrix, having the eigenvalues of $L$ on the main diagonal.
Equation \eqref{eq:compact} can be decoupled through the eigen-decomposition of the Laplacian matrix $L$ by applying the similarity transformation $T = V \otimes I_m \in \mathbb{R}^{mN \times mN}$ as
\begin{equation} \label{eq:compact1}
    \delta \dot{\bZ} (t) = \Bigg(I_N \otimes D \bF(s) - \sigma \Lambda \otimes D\bH(s)\Bigg) \delta \bZ(t),  
\end{equation}
where $\delta \bZ (t) = T^{-1} \delta \bX (t) \in \mathbb{R}^{mN}$. 
By setting $[\delta \bz_1(t)^{\top}, \hdots , \delta \bz_N (t)^\top]^\top \coloneq \delta \bZ (t)$, with $\delta \bz_i (t) \in \mathbb{R}^m, \forall i$, we can write the decomposed equation of transformed perturbations as
\begin{equation} \label{dzi}
    \delta \dot{\bz}_i (t) = \Bigg(D\bF(\bs(t)) - \sigma \lambda_i D\bH(\bs (t)) \Bigg) \delta \bz_i (t),
\end{equation}
$i = 1, \hdots, N$. By construction, one eigenvalue $\lambda_1=0$, which is associated to stability of the dynamics on the synchronization manifold. As such it is excluded from the analysis that follows, that focus on the stability of the dynamics transverse to this manifold.
The synchronization is stable if $\delta \dot{\bz}_i (t) \to 0$ as $t \to 0$, $\forall i$.

From Eq.\ \eqref{dzi} we see that the stability of the synchronization is dependent on the Jacobians $D\bF(\bs(t))$ and $D\bH(\bs(t))$, and the scalar $\sigma \lambda_i$.
We introduce the parameter $K = \sigma \lambda_i$ and determine the stability of
\begin{equation} \label{eq:decomposed}
    \dot{\by} (t) = \Bigg(D\bF(\bs(t)) - K D\bH(\bs (t)) \Bigg) \by (t),
\end{equation}
where $\by (t) \in \mathbb{R}^m$ is a generic vector representing perturbations.
The stability is assessed through the evaluation of the Maximum Lyapunov Exponent (MLE),
\begin{equation}
    \text{MLE} = \lim_{t\to \infty} \dfrac{1}{t} \log \dfrac{\| \by(t) \|}{\| \by (0) \|}.
\end{equation}
If the MLE is negative (positive), the above system is stable (unstable).
Now, assume the set of parameters $\mathcal{K}$ for which the system \eqref{eq:decomposed} is stable (MLE $<0$).
If $\sigma \lambda_i \in \mathcal{K}$, $i=2,...,N$, then the synchronization dynamics \eqref{eq:synch} is stable and $\bx_i (t) \to \bs(t)$ when $t \to \infty$.
The master stability function (MSF) maps the parameter $K$ to the maximum Lyapunov exponent (MLE) of Eq.~\eqref{eq:decomposed}.

In the following, we present MSF plots for real values of $K$ for three well-known oscillators, the Lorenz, R\"ossler, and Chua systems, and for several choices of the output function $\mathbf{H}(\mathbf{x})$. For all combinations of $\mathbf{F}(\mathbf{x})$ and $\mathbf{H}(\mathbf{x})$ considered in this study, the MSF is positive for negative values of $K$. Under the assumption that this property holds in general, it follows that  a sufficient condition for the synchronous state to become unstable is that at least one of the eigenvalues of the Laplacian matrix $L$ has negative real part.

\textbf{Lorenz MSF plots}

The state vector, the local dynamics, and the coupling function for the Lorenz system are 
\begin{align}
    \bx = \begin{bmatrix}
        x \\ y \\ z
    \end{bmatrix}, \quad \bF (\bx) = \begin{bmatrix}
        10( y-x ) \\
        x( 28-z ) - y \\
        xy - \frac{8}{3} z
    \end{bmatrix}, \quad \bH(\bx) = H \bx,
\end{align}
where the matrix $H$ is the linear coupling matrix. 
We provide 9 MSF plots for the Lorenz system, where for each plot, a different entry of $H$ is equal to 1, and all others are zero.
It  follows that the Jacobian $D\bH(\bs) = H$.
The Jacobian of the local dynamics is
\begin{equation}
    \bs = \begin{bmatrix}
        x_s \\ y_s \\ z_s
    \end{bmatrix}, \quad D \bF (\bs) = \begin{bmatrix}
        -10 & 10 &  0 \\
        28-z_s& -1& -x_s \\
        y_s & x_s & -\frac{8}{3}
    \end{bmatrix}
\end{equation}

\begin{figure}
    \centering
    \includegraphics[width=0.99\linewidth]{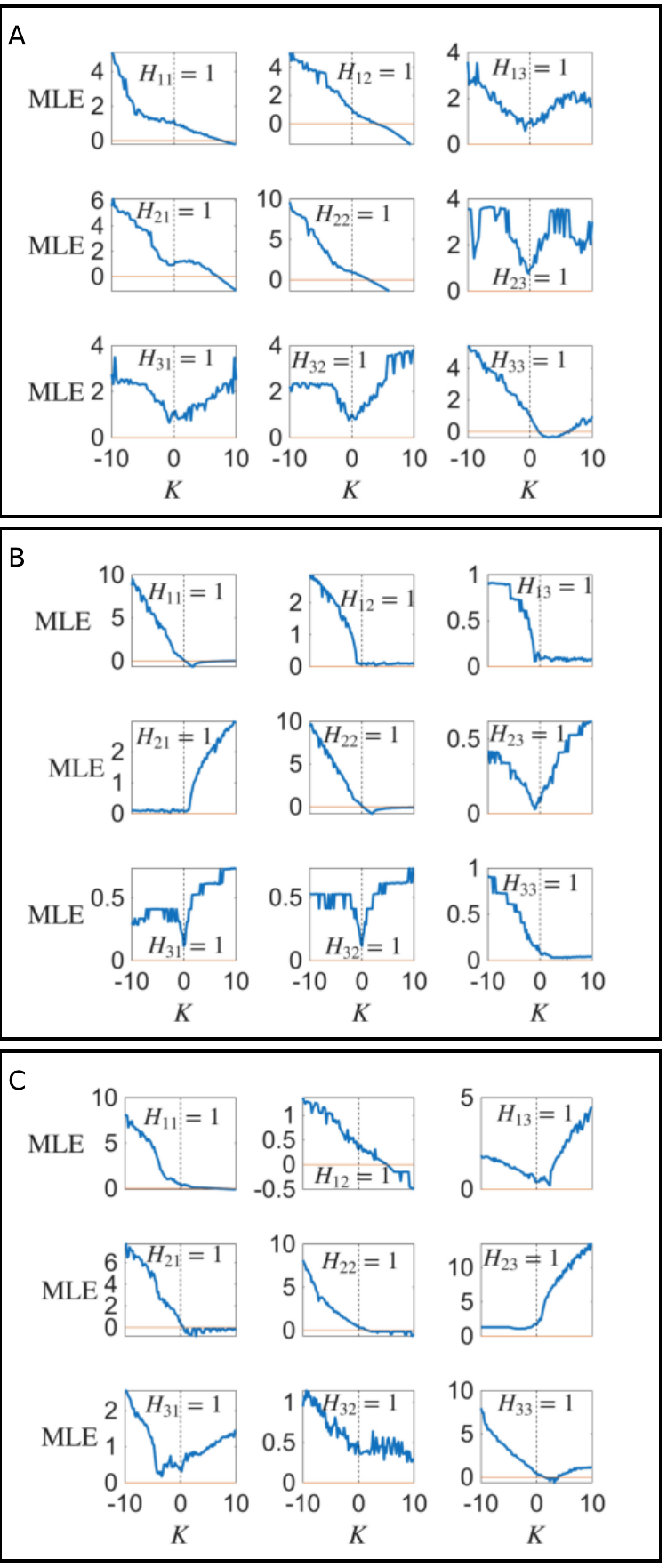}
    \caption{\textbf{Master stability functions for both positive and negative values of the function argument.}
    Master stability plots for the Lorenz system (A), Roessler system (B), and Chua System (C). Plots show the Maximum Lyapunov Exponent (MLE) as the argument $K$ of Eq.\,\eqref{eq:decomposed} is varied.
    In each subfigure, panel in row $i$ (from the top) and column $j$ (from the left) shows the case when $H_{ij} = 1$, and all other entries of the matrix $H$ are zero.
    }
    \label{fig:MSFLorenz}
\end{figure}

Figure \ref{fig:MSFLorenz} shows that the MSF is positive in all different coupling setups for the negative range of the MSF argument $K$. 
Thus, it shows that the synchronization is locally unstable for negative ranges of the parameter $K$.

\textbf{R\"ossler MSF plots}

The state vector, the local dynamics, and the coupling function for the Rossler system are 
\begin{align}
    \bx = \begin{bmatrix}
        x \\ y \\ z
    \end{bmatrix}, \quad \bF (\bx) = \begin{bmatrix}
        -y - z \\
        x + 0.1y \\
        0.1 + ( x-35 )z
    \end{bmatrix}, \quad \bH(\bx) = H \bx,
\end{align}
where the matrix $H$ is the linear coupling matrix. 
We provide 9 MSF plots for the Rossler system, where for each plot, a different entry of $H$ is equal to 1, and all others are zero.
It  follows that the Jacobian $D\bH(\bs) = H$.
The Jacobian of the local dynamics is
\begin{equation}
    \bs = \begin{bmatrix}
        x_s \\ y_s \\ z_s
    \end{bmatrix}, \quad D \bF (\bs) = \begin{bmatrix}
        0 & -1 & -1 \\
        1 & 0.1 & 0 \\
        z_s & 0 & x_s-35
    \end{bmatrix}
\end{equation}


Figure \ref{fig:MSFRossler} shows that the MSF is positive in all different coupling setups for the negative range of the MSF argument $K$. 
Thus, it shows that the synchronization is locally unstable for negative ranges of the parameter $K$.

\textbf{Chua's circuit MSF plots}

The state vector, the local dynamics, and the coupling function for the Chua's circuit system are 
\begin{align}
    \bx = \begin{bmatrix}
        x \\ y \\ z
    \end{bmatrix}, \quad \bF (\bx) = \begin{bmatrix}
        10( y - x + g(x) ) \\
        x - y + z \\
        -17.85y]
    \end{bmatrix}, \quad \bH(\bx) = H \bx,
\end{align}
where the matrix $H$ is the linear coupling matrix, and the function 
\begin{equation}
    g(x) = \begin{cases}
        - bx - a + b, & \quad x > 1, \\
        - ax, & \quad  |x| < 1, \\
        - bx + a - b, & \quad x < -1,
    \end{cases}, \quad \begin{cases}
        a = -1.2, \\
        b = -0.7.
    \end{cases}
\end{equation}
We provide 9 MSF plots for the Chua's circuit system, where for each plot, a different entry of $H$ is equal to 1, and all others are zero.
It  follows that the Jacobian $D\bH(\bs) = H$.
The Jacobian of the local dynamics is
\begin{equation}
    \bs = \begin{bmatrix}
        x_s \\ y_s \\ z_s
    \end{bmatrix}, \quad D \bF (\bs) = \begin{bmatrix}
        -10-10 d(x_s) & 10 & 0 \\
    1 & -1 & 1 \\
    0 & -17.85 & 0
    \end{bmatrix}
\end{equation}
where $d(x_s) = b$ when $|x_s| > 1$, and $d(x_s) = a$, otherwise.


Figure \ref{fig:MSFChua} shows that the MSF is positive in all different coupling setups for the negative range of the MSF argument $K$. 
Thus, it shows that the synchronization is locally unstable for negative ranges of the parameter $K$.

\subsection{Asymptotic Stability Analysis for the Case of Power Grid Dynamics \label{Sec:MAII}}

We linearize Eq.\  \eqref{eq:swingsimple} about the synchronous operating state $\theta_i^*= \omega_s t+ \theta_i^0$,
\begin{equation} \label{Eq:J}
\begin{aligned}
     \delta \ddot{\vartheta}_i (t) &= - {\gamma} {\delta \dot\vartheta}_i (t) - \sum_{j= 1, j \neq i}^N A_{ij} \cos(\theta_i^0 - \theta_j^0) (\delta \vartheta_i-\delta \vartheta_j)\\
     &= - {\gamma} {\delta \dot\vartheta}_i (t) - \sum_{j= 1, j \neq i}^N A'_{ij}  (\delta \vartheta_i-\delta \vartheta_j)\\
     &= - {\gamma} {\delta \dot\vartheta}_i (t) - \sum_{j= 1}^N L'_{ij}  \delta \vartheta_j
     \end{aligned}
     \end{equation}
$i=1,...,N$, where $\gamma>0$, $A=[A_{ij}]$ is the original adjacency matrix,  i.e.,  $A_{ij}=A_{ji}=1$ ($0$) if there is (is not) a transmission line between between nodes $i$ and $j$, $A'=[A'_{ij}]$ is the weighted adjacency matrix,
with entries $A'_{ij}= A_{ij} \cos(\theta_i^0 - \theta_j^0)$, and $L'=[L'_{ij}]$ is the associated Laplacian matrix, with entries $L'_{ij}= (\delta_{ij} \sum_j A'_{ij}  - A'_{ij})$. We proceed under the assumption that, prior to the intervention of the intruder, the differences $\theta_i^0 - \theta_j^0$ are small, so that  $A'_{ij}>0$ iff $A_{ij}=1$, for which the Laplacian matrix $L'$ is proper.

By diagonalizing the Laplacian matrix $L'$, the last line of Eq.\ \eqref{Eq:J} can be transformed into the following independent equations,
\begin{equation} \label{Eq:JJ}
    \ddot{y}_k(t)= - {\gamma} {\dot y}_k (t) - \lambda'_{k}  y_k(t),
\end{equation}
$k=1,..,N$, where $\lambda'_{k}$, $k=1,..,N$,  are the eigenvalues of the matrix $L'$. By construction there is one eigenvalue $\lambda'_{k}=0$, which corresponds to marginal stability of the synchronous operating state $\theta_i^*= \omega_s t+ \theta_i^0$.
Transverse stability is determined by the remaining eigenvalues of $L'$, as we explain below. 

In the case that $\lambda'_k$ is real, Eq.\ \eqref{Eq:JJ} is: (i) stable for $\lambda'_{k}>0$, (ii) unstable for $\lambda'_{k}<0$, and (iii) marginally stable for $\lambda'_{k}=0$. 

A more interesting case is for a complex eigenvalue of the Laplacian $\lambda'_k=\alpha+ j \beta$, for which the characteristic equation is,
\begin{equation} \label{pol}
    r^2+\gamma r +(\alpha+j \beta)=0,
\end{equation}
where here $r$ is a complex number and we recall that $\gamma$ is real and positive. The largest real part root of Eq.\ \eqref{pol} is
\begin{equation} \label{largest_real_root}
    -\frac{\gamma}{2} + \frac{1}{2} \sqrt{\frac{|D|+\gamma^2 - 4 \alpha}{2}},
\end{equation}
where the modulus of the complex discriminant $|D|=\sqrt{(\gamma^2-4 \alpha)^2 + 16 \beta^2}$. Eq.\ \eqref{largest_real_root} is negative if 
\begin{equation} \label{crilu}
\beta^2< \gamma^2 \alpha.
\end{equation}
For stability of the synchronous operating state, criterion \eqref{crilu} needs to be satisfied for all the eigenvalues $\lambda'_k=\alpha+ j \beta$ of $L'$ (with the exception of the one zero eigenvalue.) It is easy to see that \eqref{crilu} is never satisfied for $\alpha<0$. 

We thus conclude that a sufficient condition for the synchronous operating state to be unstable is that at least one of the eigenvalues of the  the matrix $L'$ has negative real part (however, instability is also possible when all the eigenvalues of $L'$ have negative or zero real part.)

\subsection{Asymptotic Stability Analysis for the Case of Formation Control Dynamics \label{Sec:MAIII}}

In the case of Eq.\ \eqref{Eq_formation_control}, the dynamics is linear and stability depends uniquely on the spectrum of the matrix $F(L)$. In the case considered in this paper that $F(L) = I_N - \exp (-3L)$, each eigenvalues of the matrix $F(L)$ is equal to $\mu_k=1- \exp (-3 \lambda_k)$, where $\lambda_k$ is an eigenvalue of $L$, $k=1,..,N$. 

It follows that the transformation that maps $\lambda_k$ into  $\mu_k$ preserves the origin and the sign of the real part of the eigenvalues. Specifically,
\begin{equation}
\lambda_k = 0 \iff \mu_k = 0,
\end{equation}
and
\begin{equation}
\operatorname{sgn}\!\left(\Re(\mu_k)\right)
=
\operatorname{sgn}\!\left(\Re(\lambda_k)\right),
\qquad k = 1,\ldots,N,
\end{equation}
where with $\operatorname{sgn}\!(x)$ we indicate the sign of $x$.

We conclude that if the Laplacian matrix $L$ is (is not) proper, the formation control state is stable (unstable.)

\subsection{Asymptotic Stability Analysis for the Case of the Kuramoto Dynamics \label{Sec:MAIIII}}

We fix the adjacency matrix $A$ and linearize Eq.\  \eqref{eq:swingsimple} about the phase locked state $\theta_i^*= \omega_s t+ \theta_i^0$,
\begin{equation} \label{Eq:K}
\begin{aligned}
     \delta \dot{\vartheta}_i (t) &= -\dfrac{1}{N+1} \sum_{j=1, j \neq i}^{N+1}  A_{ij} \cos(\theta_i^0 - \theta_j^0) (\delta \vartheta_i-\delta \vartheta_j)\\
     &=  -\dfrac{1}{N+1} \sum_{j=1, j \neq i}^{N+1}  A'_{ij}  (\delta \vartheta_i-\delta \vartheta_j)\\
     &= -\dfrac{1}{N+1}  \sum_{j= 1, j \neq i}^{N+1} L'_{ij}  \delta \vartheta_j
     \end{aligned}
     \end{equation}
$i=1,...,N$, where  $A=[A_{ij}]$ is the fixed adjacency matrix defined in Eq. \eqref{eq:Akuramoto} either before or after the attack, $A'=[A'_{ij}]$ is the weighted adjacency matrix,
with entries $A'_{ij}= A_{ij} \cos(\theta_i^0 - \theta_j^0)$, and $L'=[L'_{ij}]$ is the associated Laplacian matrix, with entries $L'_{ij}= (\delta_{ij} \sum_j A'_{ij}  - A'_{ij})$. We proceed under the assumption that, prior to the intervention of the intruder, the differences $\theta_i^0 - \theta_j^0$ are small, so that  $A'_{ij}>0$ iff $A_{ij}=1$, for which the Laplacian matrix $L'$ is proper.

By diagonalizing the Laplacian matrix $L'$, the last line of Eq.\ \eqref{Eq:K} can be transformed as follows,
\begin{equation} \label{Eq:KK}
    \dot{y}_k= - \lambda'_{k}  y_k,
\end{equation}
$k=1,..,N$, where $\lambda'_{k}$, $k=1,..,N$,  are the eigenvalues of the matrix $L'$. Equation \eqref{Eq:KK} is stable if the real part of $\lambda_k'$ is positive, unstable if it is negative, and marginally stable if $\lambda_k'=0$. By construction there is one eigenvalue $\lambda'_{k}=0$, which corresponds to marginal stability of the phase locked state $\theta_i^*= \omega_s t+ \theta_i^0$. Transverse stability is achieved if all the other eigenvalues of $L'$ have positive real part. 

We conclude that a sufficient condition for the phase-locked state to become unstable is that at least one of the eigenvalues of the  the matrix $L'$ has negative real part.

\section{Reproducibility Code Description}

The MATLAB code used to generate Fig.~\ref{fig:synch_node}B performs the following steps. 
First, the adjacency matrix $A$ is loaded, and the graph Laplacian matrix $L$ is constructed.
For each selected target node $i$, an attack vector $\bb$ is defined with $b_i=-c$ and all other entries equal to zero, with $c=1$ as the budget for the attack. 
The unidirectional augmented Laplacian after the attack is then constructed as in Eq.~\eqref{eq:Lnewuni}.
The simulation uses the original Laplacian before the attack time $t_s=4 s$ and switches to the augmented Laplacian for $t \geq t_s$. 
A network of coupled Lorenz oscillators is then simulated using MATLAB’s ODE solver. 
The norm of the transverse perturbations $\| \delta \bX (t) \|$ is evaluated as in Eq.~\eqref{eq:trans_synch}. 
Finally, the code plots $\| \delta \bX (t) \|$ in time for different attacked nodes, and saves a figure similar to Fig.~\ref{fig:synch_node}B.

\color{black}

\section*{Data availability}
All data generated or analyzed during this study are included in this published article (and its supplementary information files).

\section*{Code availability}

The MATLAB code used in this study is openly available as a reproducible compute capsule on Code Ocean at \url{https://codeocean.com/capsule/0183273/tree/v1} under an MIT license.

\color{black}

\section*{References}

\section*{Funding Statement}
FS acknowledges support from grants AFOSR FA9550-24-1-0214 and Oak Ridge National Laboratory 006321-00001A. MA acknowledges funding support from FSU CRS-SEED program, `Structure and Dynamics of Non-Normal Networks.' HAM acknowledges funding  from NSF-HNDS Grant 2214217.

\section*{Author Contributions}
AN worked on the theory and numerical simulations. ST worked on the numerical simulations and the figures. MA worked on the theory and contributed to writing the paper. DP and HAM provided essential input and guidance.    FS worked on the theory and supervised the research.

\section*{Competing Interests Statement}
The authors declare no competing interests

\end{document}